\newtheorem{example}{Example}
\newtheorem{theorem}{Theorem}
\definecolor{myblue}{RGB}{0, 0, 0}
\tikzset{dot/.style = {circle, fill, minimum size=#1,inner sep=0pt, outer sep=0pt, fill, circle},dot/.default = 6pt}
\tikzset{dot2/.style = {circle, fill, color=black!40,minimum size=6pt,inner sep=0pt, outer sep=0pt, fill, circle}}
\tikzstyle{a}=[->,>=stealth,dashed]
\tikzstyle{a2}=[->,>=stealth]
\tikzstyle{nodo}=[ellipse,draw=black!100,fill=black!0,line width=.7pt,minimum width=1.2cm,minimum height=0.8cm,text width=1.2cm,text centered]
\tikzstyle{nodo2}=[ellipse,draw=black!100,fill=black!10,line width=.7pt,minimum width=1.2cm,minimum height=0.8cm,text width=1.2cm,text centered]
\tikzstyle{nodo3}=[ellipse,draw=black!100,fill=black!30,line width=.7pt,minimum width=1.2cm,minimum height=0.8cm,text width=1.2cm,text centered]
\tikzstyle{arco}=[draw=black!80,line width=.7pt, postaction={decorate}, decoration={markings,mark=at position 1.0 with {\arrow[ draw=black!80,line width=.7pt]{>}}}]
\journal{International Journal of Approximate Reasoning}
\begin{document}
\begin{frontmatter}
\title{Approximating  Counterfactual Bounds while Fusing Observational, Biased and Randomised Data Sources}
\author[IDSIA]{Marco Zaffalon}
\ead{marco.zaffalon@idsia.ch}
\author[IDSIA]{Alessandro Antonucci\corref{cor1}}
\ead{alessandro.antonucci@idsia.ch}
\cortext[cor1]{Corresponding author}
\author[UAL]{Rafael Cabañas}
\ead{rcabnas@ual.es}
\author[IDSIA]{David Huber}
\ead{david.huber@idsia.ch}
\address[IDSIA]{IDSIA, Lugano (Switzerland)}
\address[UAL]{Department of Mathematics, University of Almería, Almería (Spain)}
\tnoteref{label1}
\begin{abstract}
We address the problem of integrating data from multiple, possibly biased, observational and interventional studies, to eventually compute counterfactuals in structural causal models. We start from the case of a single observational dataset affected by a selection bias. We show that the likelihood of the available data has no local maxima. This enables us to use the causal expectation-maximisation scheme to approximate the bounds for partially identifiable counterfactual queries, which are the focus of this paper. We then show how the same approach can address the general case of  multiple datasets, no matter whether interventional or observational, biased or unbiased, by remapping it into the former one via graphical transformations. Systematic numerical experiments and a case study on palliative care show the effectiveness of our approach, while hinting at the benefits of fusing heterogeneous data sources to get informative outcomes in case of partial identifiability.
\end{abstract}
\begin{keyword}
Counterfactuals \sep randomised data \sep biased data \sep structural causal models \sep expectation maximisation.
\end{keyword}
\end{frontmatter}
\section{Introduction}
Consider the data in Table~\ref{tab:study}. They represent the results of an artificial case study from \citet{mueller2022}, where patients affected by a deadly disease test a drug that might help them. The table reports two independent studies carried out on different groups of patients: a randomised control trial (first eight rows) and an observational study (last eight rows). From those data, \citet{mueller2022} compute a typical counterfactual such as the \emph{probability of necessity and sufficiency} (PNS), which is the proportion of people that would recover if treated and die otherwise. They do this analytically by conditioning on gender. For males they obtain that $\mathrm{PNS}\in[0.28,
0.49]$ if only the randomised trial is considered, whereas the sharp value $\mathrm{PNS}=0.49$ is obtained when observational data are also taken into account. Analogous results are obtained for females.

\begin{table}[htp!]
\centering
\begin{tabular}{llllr}
\toprule
Study&Treatment&Gender&Survival&Counts\\
\midrule
\multirow{8}{*}{interventional}&do(drug)&female&survived&489\\
&do(drug)&female&dead&511\\
&do(drug)&male&survived&490\\
&do(drug)&male&dead&510\\
&do(no drug)&female&survived&210\\
&do(no drug)&female&dead&790\\
&do(no drug)&male&survived&210\\
&do(no drug)&male&dead&790\\
\hline
\multirow{8}{*}{observational}&drug&female&survived&378\\
&drug&female&dead&1022\\
&{\color{black!30}{drug}}&{\color{black!30}{male}}&{\color{black!30}{survived}}&{\color{black!30}{980}}\\
&{\color{black!30}{drug}}&{\color{black!30}{male}}&{\color{black!30}{dead}}&{\color{black!30}{420}}\\
&{\color{black!30}{no drug}}&{\color{black!30}{female}}&{\color{black!30}{survived}}&{\color{black!30}{420}}\\
&{\color{black!30}{no drug}}&{\color{black!30}{female}}&{\color{black!30}{dead}}&{\color{black!30}{180}}\\
&no drug&male&survived&420\\
&no drug&male&dead&180\\
\bottomrule
\end{tabular}
\caption{Data from interventional and observational studies on the potential effects of a drug on patients affected by a deadly disease.}\label{tab:study}
\end{table}

This example illustrates the main traits of the problem we are going to address in this paper: first and foremost that the computation of counterfactuals is often only \emph{partially identifiable}, in the sense that in general we can at best bound them, no matter the amount of data; but also that joining observational and randomised studies (under the guidance of a so-called \emph{structural causal model}) can strengthen the results, i.e., narrow the bounds. 

What the original example instead does not take into account is the possible presence of a \emph{selection bias}, which is the systematic exclusion of a certain subpopulation from the sample. To account for this, we assume here that half of the data (the ones about treated males and untreated females, denoted in grey in the table) from the observational study is missing because of a problem with the communication protocol. The challenge is to produce correct counterfactual bounds for the overall population given the partial view induced by the biased data (in what follows, we call these data under selection bias just `biased' for brevity). 

In this paper we shall give general methods to solve problems as above: that is, the computation of counterfactuals with structural causal models where available data are a mix of observational and interventional (e.g., based on randomised studies) datasets, possibly subject to selection bias. We refer to these data as \emph{heterogeneous}. 

Such a setting essentially coincides with the general problem of information fusion as defined by \citet{bareinboim2016causal}. This, and nearly all other works in the literature, have however focused on the special case of identifiable queries, namely, those that can be reduced to probabilistic expressions. The only apparent exception is the recent work by \cite{zhang2021} that showcases the application of MCMC techniques to solve a problem of information fusion without selection bias. Selection bias in itself has been studied in the causal literature for long (e.g., \citeauthor{cooper1995causal}, \citeyear{cooper1995causal}; \citeauthor{pearl2012solution}, \citeyear{pearl2012solution}; \citeauthor{bareinboim2012controlling}, \citeyear{bareinboim2012controlling}; and \citeauthor{bareinboim2015recovering}, \citeyear{bareinboim2015recovering}) but the unidentifiable case is basically unexplored.

In contrast with the trend in the literature, the present work will entirely focus on the general case of partial identifiability. In particular, after reviewing the necessary background material about structural causal models and Bayesian networks in  Section~\ref{sec:background}, we shall quickly review our EM-based scheme (we call it EMCC) for observational datasets in Section~\ref{sec:emcc}. The EMCC is an algorithm that takes a structural causal model in input, along with observational data, and allows us to approximately compute counterfactual inference in the form of a possible \emph{range} of values included in the actual bounds. The EMCC will be extended to the case of biased datasets in Section~\ref{sec:s-emcc} and to multiple studies is Section~\ref{sec:multi-db}. We shall present in Section~\ref{sec:experiments} a numerical validation on a synthetic benchmark and a case study on palliative care involving a biased observational dataset. Conclusions and outlooks will be reported in Section~\ref{sec:conclusions}.\footnote{{\color{myblue}{Our paper is an extended and revised version of a contribution presented at the Eleventh International Conference on Probabilistic Graphical Models \citep{zaffalon2022}.}} {\color{myblue}{Section~\ref{sec:s-emcc} includes a revision of the technical material originally presented in the conference version. The procedure to cope with hybrid data, discussed in Section~\ref{sec:multi-db}, as well as the experiments and the case study (Section~\ref{sec:experiments}), are instead original material presented here for the first time.}}}

\section{Background on Bayesian Networks and Structural Causal Models}\label{sec:background}
A generic variable $X$ is assumed to take values from a set $\Omega_X$. Note that we only consider variables taking values from finite sets. A probability mass function (PMF) over $X$ is denoted as $P(X)$. Given variables $Y$ and $X$, a conditional probability table (CPT) $P(Y|X)$ is a collection of PMFs over $Y$ indexed by the values of $X$, i.e., $\{P(Y|X)\}_{x\in\Omega_X}$. If all PMFs in a CPT are \emph{degenerate}, i.e., $\{0,1\}$-valued, we say that also the CPT is degenerate. 

A \emph{structural equation} (SE) $f$ associated with variable $Y$ and based on the input variable(s) $X$, is a surjective function $f:\Omega_{X} \to \Omega_Y$ that determines the value of $Y$ from that of $X$. Such an SE induces the degenerate CPT $P(Y|X)$ via $P(y|x):=\llbracket f(x)=y \rrbracket$ for each $x\in\Omega_X$ and $y\in\Omega_Y$, where $\llbracket \cdot \rrbracket$ denotes the Iverson brackets that take value one if the statement inside the brackets is true and zero otherwise (it is just another way to denote events similarly to what one does via indicator functions).

Consider a joint variable $\bm{X}:=(X_1,\ldots,X_n)$ and a directed acyclic graph $\mathcal{G}$ whose nodes are in a one-to-one correspondence with the variables in $\bm{X}$ (we use a node in $\mathcal{G}$ and its corresponding variable interchangeably). Given $\mathcal{G}$, a Bayesian network (BN) is a collection of CPTs  $\{P(X_i|\mathrm{Pa}_{X_i})\}_{i=1}^n$, where $\mathrm{Pa}_{X_i}$ denotes the \emph{parents} of $X_i$, i.e., the direct predecessors of $X_i$ according to $\mathcal{G}$. A BN induces a joint PMF $P(\bm{X})$ that factorises as $P(\bm{x})=\prod_{i=1}^n P(x_i|\mathrm{pa}_{X_i})$, for each $\bm{x}\in\Omega_{\bm{X}}$, where $(x_i,\mathrm{pa}_{X_i})\sim \bm{x}$, i.e., $x_i$ and $\mathrm{pa}_{X_i}$ are the values of $X_i$ and $\mathrm{Pa}_{X_i}$ consistent with $\bm{x}$ for each $i=1,\ldots,n$.

Now consider two disjoint sets of variables $\bm{U}$ and $\bm{V}$, which we respectively refer to as \emph{exogenous} and \emph{endogenous}. A collection of SEs $\{f_V\}_{V\in\bm{V}}$ such that, for each $V\in\bm{V}$ the input variables of $f_V$ are in $(\bm{U},\bm{V})$, {\color{myblue}{with at least one of them in $\bm{U}$}}, is called a \emph{partially specified} structural causal model (PSCM). {\color{myblue}It corresponds to the notion of a `functional causal model' in \citet[Chapter~1.4]{pearl2009causality}.} A PSCM $M$ induces the specification of a directed graph $\mathcal{G}$ with nodes in a one-to-one correspondence with the variables in $(\bm{U},\bm{V})$ and such that there is an arc between two variables if and only if the first variable is an input variable for the SE of the second. The exogenous variables are therefore root nodes of $\mathcal{G}$. We focus on \emph{semi-Markovian} PSCMs, i.e., those PSCMs that lead to acyclic graphs \citep{pearl2009causality}. {\color{myblue} Moreover, if each exogenous variable has exactly one endogenous child,  we call the PSCM \emph{Markovian}.\footnote{\color{myblue} In order to avoid confusion, let us recall that Pearl's original definition of a functional causal model \cite[Section~1.4.1]{pearl2009causality} requires that there is a one-to-one correspondence between endogenous and exogenous variables; that is, each exogenous variable has exactly one endogenous child and vice versa. In such a case, assuming independence of the exogenous variables makes the model Markovian. In this paper, as it is common in much of the literature, we use an equivalent definition that keeps on requiring that endogenous variables have at least one exogenous parent, but it allows exogenous variables to have more than one endogenous child; at the same time it assumes that the exogenous variables are mutually independent. To have Markovianity in this setting, we need to require in addition that exogenous variables have exactly one endogenous child, thus going back to the one-to-one correspondence in the original definition of a functional causal model.}}

{\color{myblue}{PSCMs assume SEs to be given. However, in the case where endogenous variables take on finitely many values, recent results allow one to specify a PSCM by only giving a causal graph; SEs are automatically defined, without loss of generality (yet at the cost of adding a likely excess of caution to the model), via a \emph{canonical specification}. Let us introduce it in the simple case of Markovian models: we say that SE $f_V$ is canonical if the states of the single (because of Markovianity) exogenous parent $U$ of $V$ index all the deterministic relations between the endogenous parents, denoted here as $\bm{W}_V$, and $V$. This requires:}}
{\color{myblue}{\begin{equation}\label{eq:canonical}
|\Omega_U|=|\Omega_V|^{\prod_{W\in\bm{W}_V} |\Omega_W|}\,.
\end{equation}}}
{\color{myblue}{As an example, in a Markovian model, if $V$ is Boolean and it has a single endogenous parent $W$, which is also Boolean, because of Equation~\eqref{eq:canonical}, we need its exogenous parent $U$ to have cardinality four in order to enumerate all the possible deterministic relations between the two Boolean variables $W$ and $V$. E.g., we might set $f_V(W,U=0)=0$, $f_V(W,U=1)=W$, $f_V(W=U=2)=\neg W$, and $f_V(W=U=3)=1$. A Markovian PSCM whose SEs are all canonical is also called canonical. We refer the reader to the works of \citet{duarte2021} and \citet{zhang2021} for a generalisation of such a concept to non-Markovian models.}}

In a PSCM $M$ we obtain a joint state of $\bm{V}$ from a (joint) state of $\bm{U}$ by applying the SEs of $M$ consistently with a topological order for $\mathcal{G}$. A \emph{fully specified} structural causal model 
(FSCM, {\color{myblue}{see \citeauthor{pearl2009causality}, \citeyear{pearl2009causality}, top of p.~69}}) is just a PSCM $M$ paired with a collection of marginal PMFs, one for each exogenous variable. As SEs induce (degenerate) CPTs, an FSCM defines a BN based on $\mathcal{G}$ whose joint PMF factorises as:
\begin{equation}\label{eq:fulljoint}
P(\bm{u},\bm{v})= \prod_{U\in\bm{U}} \theta_{u} \cdot \prod_{V\in\bm{V}} P(v|\mathrm{pa}_V)\,,
\end{equation}
for each $\bm{u}\in\Omega_{\bm{U}}$ and $\bm{v}\in\Omega_{\bm{V}}$, $(u,v,\mathrm{pa}_V)\sim(\bm{u},\bm{v})$ and where $\mathrm{Pa}_V$ are the parents of $V$ according to $\mathcal{G}$ (i.e., the inputs of SE $f_V$), while $\theta_u$ denotes the true but unknown chances for $U=u$, to be considered for each $u\in\Omega_U$ and $U\in\bm{U}$. Notation $\theta_U:=(\theta_u)_{u\in\Omega_U} \in \Delta_U$ is similarly used for the whole PMF, with $\Delta_U$ denoting the corresponding probability simplex. We extend this notation to the joint variable $\bm{U}$ via $\theta_{\bm{U}}:=\times_{U\in\bm{U}} \theta_U\in\times_{U\in\bm{U}} \Delta_U=:\Delta_{\bm{U}}$. We shall instead use $P(u)$ for an estimate of $\theta_u$ and likewise for PMFs.

Given the graph $\mathcal{G}$ of a PSCM (or FSCM) $M$, obtain $\mathcal{G}'$ by removing from $\mathcal{G}$ any arc connecting pairs of endogenous variables. Let $\{\mathcal{G}_c\}_{c\in\mathcal{C}}$ denote the connected components of $\mathcal{G}'$. The \emph{c-components} of $M$ are the elements of the partition $\{\bm{V}^{(c)}\}_{c\in\mathcal{C}}$ of $\bm{V}$, where $\bm{V}^{(c)}$ denotes the endogenous nodes in $\mathcal{G}_c$, for each $c\in\mathcal{C}$ \citep{tian2002studies}. This procedure also induces a partition of $\bm{U}$, similarly denoted as $\{\bm{U}^{(c)}\}_{c\in\mathcal{C}}$. Moreover, for each $c\in\mathcal{C}$, let $\bm{W}^{(c)}$ denote the union of the endogenous parents of the nodes in $\bm{V}^{(c)}$ and $\bm{V}^{(c)}$ itself. Finally, for each $V\in \bm{V}^{(c)}$, obtain $\bm{W}_V$ by removing from $\bm{W}^{(c)}$ the nodes topologically following $V$ and $V$ itself (note that in the notation we dropped the index $c$ as this can be implicitly retrieved from $V$). As an example, Figure~\ref{fig:twin}.a describes the identification of the two c-components of the model in Figure~\ref{fig:scm}. In this case we have $\bm{V}^{(1)}=\{V_1,V_3\}$, $\bm{V}^{(2)}=\{V_2\}$, $\bm{U}^{(1)}=\{U_1\}$, $\bm{U}^{(2)}=\{U_2\}$, 
$\bm{W}_{V_1}=\{V_2\}$, $\bm{W}_{V_2}=\emptyset$, and $\bm{W}_{V_3}=\{V_1,V_2\}$.

\citet{tian2002studies} also shows that the joint PMF $P(\bm{V})$ obtained by marginalising the exogenous variables from the joint PMF in Equation~\eqref{eq:fulljoint} is a BN, to be called here \emph{endogenous}, that factorises as follows:
\begin{equation}\label{eq:empirical}
P(\bm{v})
=\prod_{V\in\bm{V}} P(v|\bm{w}_V)\,,
\end{equation}
for each $\bm{v}\in\Omega_{\bm{V}}$, with $(v,\bm{w}_V)\sim \bm{v}$ for each $V\in\bm{V}$. 

In FSCMs, the CPTs in the right-hand side of Equation~\eqref{eq:empirical} can be computed through standard BN inference algorithms by simply regarding the FSCM as a BN. With PSCMs, assuming the availability of a dataset $\mathcal{D}$ of endogenous observations, we might also define an endogenous BN with the same factorisation and whose CPTs are directly assessed from $\mathcal{D}$.

Observational queries in FSCMs can be addressed in the endogenous BN, and the same can be done for PSCMs assuming the availability of the dataset $\mathcal{D}$ of endogenous observations. To perform causal inference, \emph{interventions} denoted as $\mathrm{do}(\cdot)$ should be considered instead. In an FSCM or PSCM $M$, given $V \in \bm{V}$ and $v\in\Omega_V$, $\mathrm{do}(V=v)$ simulates a physical action on $M$ forcing $V$ to take the value $v$. The original SE $f_V$ should be consequently replaced by a constant map $V=v$. Notation $M_v$ is used for such a modified model, whose graph is obtained by removing from $\mathcal{G}$ the arcs entering $V$, and for which evidence $V=v$ is considered. 
This operation is often called \emph{surgery}. In an FSCM $M$, given $V,W\in\bm{V}$ and $v\in\Omega_V$, $P(w|\mathrm{do}(v))$ denotes the conditional probability of $W=w$ in the post-intervention model, i.e., $P'(w|v)$, where $P'$ is the joint PMF induced by $M_v$. 

A more general setup is provided by \emph{counterfactual} queries, where the same variable may be observed as well as subject to intervention, albeit in distinct \emph{worlds}. In mathematical parlance, if $\bm{W}$ are the queried variables, $\bm{V}'$ the observed ones and $\bm{V}''$ the intervened ones, we write the query by $P(\bm{W}_{\bm{v}''}|\bm{v}')$ with possibly $\bm{V}'\cap \bm{V}''\neq \emptyset$. Popular examples of counterfactual queries involving two endogenous Boolean variables $X$ and $Y$ of an FSCM are the \emph{probability of necessity} (PN), i.e., the probability that event $Y$ would not have occurred by disabling $X$, given that $X$ and $Y$ did in fact occur. This corresponds to $P(Y_{X=0}=0|X=1,Y=1)$. Similarly the probability of \emph{sufficiency} (PS) $P(Y_{X=1}=1|X=0,Y=0)$ and \emph{necessity and sufficiency} (PNS) $P(Y_{X=1}=1,Y_{X=0}=0)$ are often considered.

Computing counterfactual queries in an FSCM may be achieved via an auxiliary structure called a \emph{twin network}  \citep{balke1994counterfactual}. This is simply an FSCM where the original endogenous nodes (and their SEs) have been duplicated, while remaining affected by the same exogenous variables (see, e.g., Figure~\ref{fig:twin}.b). More general and compact structures (e.g., involving more than two copies of the same endogenous node) can be also considered \citep{shpitser2007counterfactuals}. Computing a counterfactual in the twin network of an FSCM is analogous to what is done with interventional queries provided that interventions and observations are associated with distinct copies of the same variable. If the query involves multiple variables like in the case of the PNS, an auxiliary Boolean child of these variables, that is true if and only if its parent variables are in the queried state, can be added to the model. BN inference eventually allows one to compute the counterfactual query in such an augmented model. 

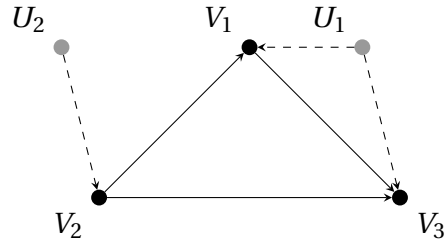
\begin{figure}[htp!]
\centering
\begin{tikzpicture}[scale=1.0]
\node[dot2,label=above left:{$U_1$}] (u1)  at (2.5,1) {};
\node[dot2,label=above left:{$U_2$}] (u2)  at (-1.5,1) {};
\node[dot,label=above left:{$V_1$}] (v1)  at (1,1) {};
\node[dot,label=below left:{$V_2$}] (v2)  at (-1,-1) {};
\node[dot,label=below right:{$V_3$}] (v3)  at (3.0,-1) {};
\draw[a] (u2) -- (v2);
\draw[a] (u1) -- (v1);
\draw[a] (u1) -- (v3);
\draw[a2] (v2) -- (v1);
\draw[a2] (v1) -- (v3);
\draw[a2] (v2) -- (v3);
\end{tikzpicture}
\caption{The graph of a structural causal model over three endogenous variables (black nodes) with two exogenous variables (grey nodes). The model is used to describe the data in Table~\ref{tab:study} with $V_1$ corresponding to \emph{Treatment}, $V_2$ to \emph{Gender} and $V_3$ to \emph{Survival}.
The exogenous variable $U_1$ acts as a \emph{confounder} for $V_1$ and $V_3$, this being sufficient to have results consistent with those of \citet{mueller2022}, which are obtained by conditioning on $V_2$.}\label{fig:scm}
\end{figure}

\begin{figure}[htp!]
\centering
\begin{subfigure}{(a)}
\begin{tikzpicture}[scale=1.0]
\node[dot2,label=above left:{$U_1$}] (u1)  at (2.5,1) {};
\node[dot2,label=above left:{$U_2$}] (u2)  at (-1.5,1) {};
\node[dot,label=above left:{$V_1$}] (v1)  at (1,1) {};
\node[dot,label=below left:{$V_2$}] (v2)  at (-1,-1) {};
\node[dot,label=below right:{$V_3$}] (v3)  at (3.0,-1) {};
\draw[a] (u2) -- (v2);
\draw[a] (u1) -- (v1);
\draw[a] (u1) -- (v3);
\end{tikzpicture}
\end{subfigure}
\begin{subfigure}{(b)}
\begin{tikzpicture}[scale=1.0]
\node[dot2,label=above right:{$U_1$}] (u1)  at (2.5,1) {};
\node[dot2,label=above left:{$U_2$}] (u2)  at (-1.5,1) {};
\node[dot,label=left:{$V_1$}] (v1)  at (1,1) {};
\node[dot,label=below left:{$V_2$}] (v2)  at (-1,-1) {};
\node[dot,label=below right:{$V_3$}] (v3)  at (3.0,-1) {};
\node[dot,label=left:{$V_1'$}] (v11)  at (1,2) {};
\node[dot,label=above left:{$V_2'$}] (v22)  at (-1,4) {};
\node[dot,label=above right:{$V_3'$}] (v33)  at (3.0,4) {};
\draw[a] (u2) -- (v2);
\draw[a] (u1) -- (v1);
\draw[a] (u1) -- (v3);
\draw[a] (u2) -- (v22);
\draw[a] (u1) -- (v11);
\draw[a] (u1) -- (v33);
\draw[a2] (v2) -- (v1);
\draw[a2] (v1) -- (v3);
\draw[a2] (v2) -- (v3);
\draw[a2] (v22) -- (v11);
\draw[a2] (v11) -- (v33);
\draw[a2] (v22) -- (v33);
\end{tikzpicture}
\end{subfigure}
\caption{C-components (a) and twin network (b) of the model in Figure~\ref{fig:scm}.}\label{fig:twin}
\end{figure}
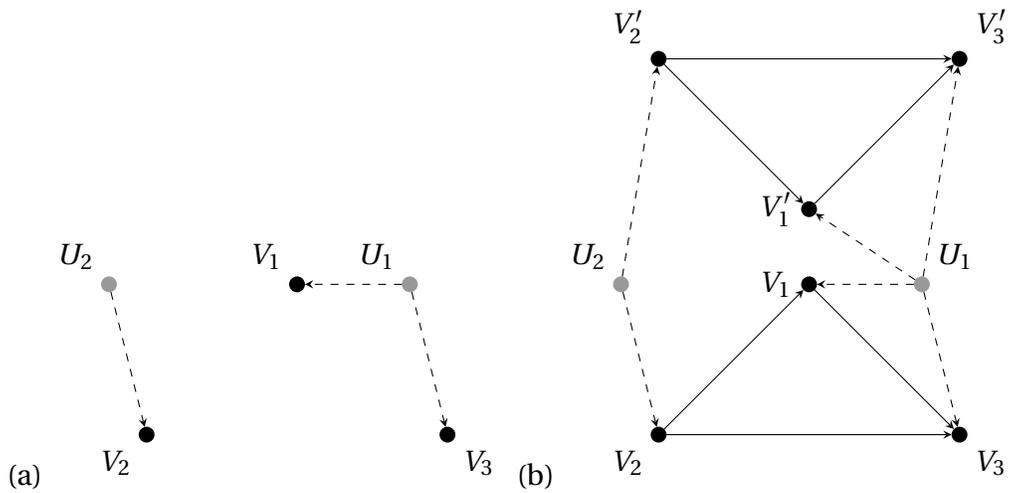

\section{Partial Identifiability (with a Single Unbiased Observational Dataset)}\label{sec:emcc}
In this section we formalise the fundamental notion of partially identifiable query and review the basics of the EM-based scheme proposed by \citet{zaffalon2021} to address queries of this kind.\footnote{An extended version of that conference paper is under review \citep{zaffalon2023}.} Such an EM procedure considers the case of a single, purely observational and unbiased, dataset. An extension to the case of (single and observational) biased datasets is in Section~\ref{sec:s-emcc}, while the fully general case of multiple, observational and interventional, unbiased or biased, datasets is addressed in Section~\ref{sec:multi-db}.

As already noticed in the previous section, the general problem of computing counterfactuals can be reduced to standard BN inference in FSCMs. Yet exogenous variables are typically latent; their marginal PMFs being unavailable leaves us with PSCMs. 

Say that we have a PSCM $M$ over variables $(\bm{U},\bm{V})$ together with a dataset $\mathcal{D}$ of endogenous observations. This permits quantifying the endogenous BN from which any observational query can be computed. The do-calculus \citep{pearl2009causality} instead permits reducing \emph{identifiable} interventional queries on PSCMs to observational ones, which are possibly then computed in the endogenous BN. The remaining queries are called here \emph{partially} identifiable. 

To start approaching partially identifiable queries in the PSCM $M$ from $\mathcal{D}$, we first compute the endogenous BN inducing the endogenous joint PMF $P(\bm{V})$. To be consistent with the endogenous PMF, the exogenous PMF $\theta_{\bm{U}}$ of an FSCM should satisfy:
\begin{equation}\label{eq:ident}
\sum_{\bm{u}\in\Omega_{\bm{U}}} 
\left[ 
\prod_{U\in\bm{U}} \theta_u 
\cdot \prod_{V\in\bm{V}} P(v|\mathrm{pa}_V) 
\right] = P(\bm{v})\,,
\end{equation}
for each $\bm{v}\in\Omega_{\bm{V}}$, with $(u,v,\mathrm{pa}_V)\sim (\bm{u},\bm{v})$ and $\theta_u \sim \theta_{\bm{U}}$. If such a $\theta_{\bm{U}}$ exists, we say that the endogenous BN is \emph{M-compatible} with the original PSCM \citep{zaffalon2021}. In general there are multiple FSCMs satisfying Equation~\eqref{eq:ident}, that is, there is a set $\Theta_{\bm{U}}\subseteq\Delta_{\bm{U}}$ of elements $\theta_{\bm{U}}$ that satisfies Equation~\eqref{eq:ident}. To compute a partially identifiable causal query in PSCMs, we should therefore consider {all} $\theta_{\bm{U}}\in\Theta_{\bm{U}}$ and compute the query for each of them. The result of the partially identifiable query is summarised by the lower and upper bounds obtained considering the outcomes of all the (M-)compatible FSCMs.

To practically pursue this goal, it is useful to consider the log-likelihood of $\mathcal{D}$ from a generic FSCM based on $M$, i.e.,
\begin{equation}\label{eq:liku}
LL(\theta_{\bm{U}}) := \sum_{\bm{v}\in\mathcal{D}} \log \sum_{\bm{u}\in\Omega_{\bm{U}}} 
\left[ \prod_{U\in\bm{U}} {\theta}_{{u}} \cdot \prod_{V\in\bm{V}} P(v|\mathrm{pa}_V) \right],\,
\end{equation}
with $(u,v,\mathrm{pa}_V) \sim (\bm{u},\bm{v})$ and ${\theta}_{{u}} \sim \theta_{\bm{u}}$. Interestingly, this function has a single global maximum and achieving such a maximum is an equivalent condition to identify the  (M-)compatible FSCMs, as stated by the following result.

{\color{myblue}{
\begin{theorem}[\citeauthor{zaffalon2021}, \citeyear{zaffalon2021}, Theorem 1]\label{prop:iff}
If $\Theta_{\bm{U}}\neq \emptyset$, the log-likelihood in Equation~\eqref{eq:liku} has no local maxima and achieving its global maximum at $\theta_{\bm{U}}$ is an equivalent condition for $\theta_{\bm{U}}\in\Theta_{\bm{U}}$.
If $\Theta_{\bm{U}}=\emptyset$, instead, the log-likelihood takes only values strictly smaller than the global maximum.
\end{theorem}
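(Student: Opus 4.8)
The plan is to reduce the inner sum in Equation~\eqref{eq:liku} to the endogenous probability induced by the candidate FSCM, then read most of the statement off the geometry of the Kullback--Leibler divergence, isolating the ``no local maxima'' clause as the one genuinely delicate point. First I would exploit that the SEs induce degenerate CPTs: for a fixed $\bm u$ the product $\prod_{V\in\bm V}P(v|\mathrm{pa}_V)$ equals $\llbracket \text{the SEs map } \bm u \text{ to } \bm v\rrbracket$, hence is $1$ exactly when $\bm u$ deterministically produces $\bm v$ and $0$ otherwise. The inner sum therefore collapses to $Q_{\theta_{\bm U}}(\bm v):=\sum_{\bm u\in\Omega_{\bm U}}\prod_{U\in\bm U}\theta_u\prod_{V\in\bm V} P(v|\mathrm{pa}_V)$, which is precisely the left-hand side of Equation~\eqref{eq:ident}, i.e.\ the endogenous PMF that the candidate FSCM assigns to $\bm v$. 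Two consequences follow: writing $N(\bm v)$ for the count of $\bm v$ in $\mathcal D$, we get $LL(\theta_{\bm U})=\sum_{\bm v}N(\bm v)\log Q_{\theta_{\bm U}}(\bm v)$, so $LL$ depends on $\theta_{\bm U}$ only through $Q_{\theta_{\bm U}}$; and, by Tian's result, $Q_{\theta_{\bm U}}$ necessarily obeys the endogenous factorisation of Equation~\eqref{eq:empirical}.

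Next I would dispatch the equivalence and the empty case together through Gibbs' inequality, applied \emph{inside} the factorised family. Since the endogenous BN $P(\bm V)$ is precisely the maximum-likelihood fit within the family of distributions obeying Equation~\eqref{eq:empirical}, and $Q_{\theta_{\bm U}}$ always lies in that family, non-negativity of the Kullback--Leibler divergence gives $LL(\theta_{\bm U})\le \sum_{\bm v}N(\bm v)\log P(\bm v)=:LL^\star$, with equality if and only if $Q_{\theta_{\bm U}}=P(\bm V)$. But $Q_{\theta_{\bm U}}=P(\bm V)$ is exactly Equation~\eqref{eq:ident}, i.e.\ $\theta_{\bm U}\in\Theta_{\bm U}$. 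This at once yields the claimed equivalence ``$\theta_{\bm U}$ attains the global maximum $\iff\theta_{\bm U}\in\Theta_{\bm U}$'' when $\Theta_{\bm U}\neq\emptyset$, and the closing sentence: if $\Theta_{\bm U}=\emptyset$ no parametrisation realises $P(\bm V)$, so the inequality is strict everywhere and $LL$ stays below $LL^\star$ throughout.

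The remaining, and hardest, claim is the absence of spurious local maxima. Here I would first observe that $LL$ is concave when viewed as a function of the full joint exogenous law $\phi$ over $\Omega_{\bm U}$, because then $Q_\phi(\bm v)=\sum_{\bm u:\,f(\bm u)=\bm v}\phi(\bm u)$ is linear in $\phi$ and $LL$ is a sum of logarithms of non-negative linear forms; the difficulty is that Equation~\eqref{eq:liku} optimises over the \emph{non-convex} product manifold $\prod_{U\in\bm U}\Delta_U$. I would therefore decompose $LL=\sum_{c}LL^{(c)}$ over c-components, since distinct components involve disjoint exogenous blocks, and argue component by component. In the Markovian/canonical regime of Equation~\eqref{eq:canonical} each endogenous variable has its own exogenous parent, so $LL$ splits further into one summand per variable --- each $\log P(v\mid\bm w_V)$ being concave in the single block $\theta_U$ because $P(v\mid\bm w_V)=\sum_{u:\,f_V(\bm w_V,u)=v}\theta_u$ is linear there --- and joint concavity rules out spurious maxima outright.

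The genuine obstacle is a c-component carrying \emph{several} independent exogenous variables: its c-factor is then multilinear rather than linear in the blocks, so $LL^{(c)}$ need not be concave and concavity in the full joint $\phi$ does not transfer to the product submanifold. I would attack this by exploiting that $LL^{(c)}$ is still concave in each block separately (fixing the other blocks makes $Q$ linear in the remaining one), and by showing that from any $\theta_{\bm U}$ with $Q_{\theta_{\bm U}}\neq P(\bm V)$ the block-wise first-order conditions cannot all hold simultaneously, thereby exhibiting an ascent direction and contradicting local maximality. Establishing that such an ascent direction always exists --- rather than merely block-coordinate stationarity, which a saddle could also satisfy --- is, I expect, the crux of the whole argument.
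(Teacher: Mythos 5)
Your reduction of the inner sum in Equation~\eqref{eq:liku} to the endogenous PMF $Q_{\theta_{\bm{U}}}$, together with the Gibbs-inequality step, reproduces in substance what the paper does: its proof routes the same observation through Tian's c-component identity (Equation~\eqref{eq:tian}) and the uniqueness of the multinomial maximum-likelihood estimate (Equation~\eqref{eq:freqs}), which yields exactly your equivalence ``$LL(\theta_{\bm{U}})$ attains the global maximum iff $\theta_{\bm{U}}\in\Theta_{\bm{U}}$'' and the strict inequality when $\Theta_{\bm{U}}=\emptyset$. Up to that point your argument is correct and essentially coincides with the paper's.

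The gap is the one you name yourself: the ``no local maxima'' clause is never established for general semi-Markovian models, i.e.\ for c-components containing several exogenous variables, and that clause is the real content of the theorem, since it is what licenses the EM scheme of Algorithm~\ref{alg:cem}. Worse, the repair you sketch --- showing that the block-wise first-order conditions cannot all hold at any $\theta_{\bm{U}}$ with $Q_{\theta_{\bm{U}}}\neq P(\bm{V})$ --- cannot work as stated: such points do exist, because the paper itself observes (Section~\ref{sec:emcc}) that the EMCC may converge to saddle points, and at a saddle all block-wise stationarity conditions hold. What must be proved is the weaker statement that no such stationary point is a local \emph{maximum}, and your proposal stops exactly there, flagging it as ``the crux.'' For comparison, the paper's proof performs no first-order analysis on the product manifold $\Delta_{\bm{U}}$ at all: it substitutes Tian's factorisation into the c-component decomposition \eqref{eq:likdec22}, identifies $LL$ with the decomposable multinomial log-likelihood of the endogenous BN evaluated at the CPTs induced by $\theta_{\bm{U}}$, and inherits ``unique global maximum at the relative frequencies, no local maxima'' from the concavity of that likelihood in the CPT parametrisation (citing \citealt{koller2009}). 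In other words, the transfer of unimodality through the multilinear map $\theta_{\bm{U}}\mapsto Q_{\theta_{\bm{U}}}$, which you (reasonably) refuse to take for granted, is precisely what the paper's proof claims via this reparametrisation; having declined that shortcut, your write-up owes a replacement argument and does not supply one, so the theorem's central claim remains unproven in your proposal.
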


The proof of the above result is in \citet{zaffalon2021}, and also in \citet{zaffalon2023}. For the sake of a self-contained presentation, we also report a proof in \ref{app:proof}.}}

The lack of local maxima allows us to employ an \emph{expectation-maximisation} (EM) scheme, iterated with different random initialisations, to sample points from $\Theta_{\bm{U}}$. Given a PSCM $M$ and a dataset of (unbiased) endogenous observations $\mathcal{D}$, Algorithm~\ref{alg:cem} yields an FSCM compatible with the PSCM (provided that there is one): such an FSCM is obtained by adding PMFs to the exogenous variables of the PSCM in such a way that the resulting model can generate the distribution of the available data.

In practice, given an initialisation $\{P_0(U)\}_{U\in\bm{U}}$, the EM algorithm consists in regarding the posterior probability $P_0(u|\bm{v})$ as a pseudo-count for $(u,\bm{v})$, for each $\bm{v}\in\mathcal{D}$,  $u\in\Omega_U$ and $U\in\bm{U}$ (E-step). A new estimate $P_1(u):=\sum_{\bm{v}\in\mathcal{D}} \frac{P_0(u|\bm{v})}{|\mathcal{D}|}$ is consequently obtained (M-step). The scheme is iterated until convergence. Subroutine ${\tt initialisation}$ provides a random initialisation of the exogenous PMFs (line 1). The iterative procedure stops when the log-likelihood achieves a maximum (line 3).

Note that the numerical value of the global maximum of the log-likelihood can be proved to coincide with the log-likelihood of the dataset $\mathcal{D}$ from the endogenous BN whose parameters are quantified with the frequencies in $\mathcal{D}$ itself \citep{zaffalon2021}. Such a value can be used to detect the two special cases in which the EMCC does not converge to the global maximum. The first is the convergence to a saddle point: this is an unstable point and a global maximum replacing that value can be easily obtained by a small perturbation in the initialisation. On the other side, if there is no convergence to the global maximum irrespectively of the initialisation, then the endogenous BN is not M-compatible. As discussed by \citet{zaffalon2021}, this is due to a wrong modelling or to insufficient data and makes inferences unreliable irrespectively of the procedure adopted to compute them.

Computing a causal query in the FSCM returned by a single EMCC run gives the exact value of the query in the identifiable case, which then corresponds to a point inside the bounds that characterise the partially identifiable case represented by the PSCM. {\color{myblue}{An approximation of those bounds is eventually achieved by iterating the EM scheme above with variable initialisation values. This corresponds to sampling the space of FSCMs compatible with the given PSCM (the overall number of EM runs is denoted by $r$), and collecting the wanted counterfactual values across all these models. We call `range' the approximation to the actual bounds that is delimited by the lower and upper values of the $r$ outcomes above. We say that the range is an `inner' approximation to stress that it lies between the actual lower and upper bounds. Approximations that instead encompass the actual bounds are called `outer'.}}

\begin{algorithm}[htp!]
\caption{EMCC: given SCM $M$ and dataset $\mathcal{D}$ returns $\{P(U)\}_{U\in\bm{U}}$.}
\begin{algorithmic}[1]
\STATE $\{P_0(U)\}_{U\in\bm{U}} \leftarrow {\tt initialisation}(M)$
\STATE $t \leftarrow 0$
\WHILE{$LL(\{P_{t+1}(U)\}_{U\in\bm{U}})>LL(\{P_t(U)\}_{U\in\bm{U}})$}
\FOR{$U\in \bm{U}$}
\STATE $P_{t+1}(U) \leftarrow |\mathcal{D}|^{-1}\sum_{\bm{v} \in \mathcal{D}} P_t(U|\bm{v})$
\STATE $t \leftarrow t+1$
\ENDFOR
\ENDWHILE
\end{algorithmic}\label{alg:cem}
\end{algorithm}

{\color{myblue}{A demonstrative example of our EM procedure is reported here below. Here and in the other examples we consider \emph{canonical} SEs (see Section~\ref{sec:background}). Nevertheless, all the methods we present in this paper can be applied to models with SEs of any kind, no matter whether canonical or not.}}

\begin{example}\label{ex:unbiased}
Consider the observational study associated with the last eight rows in Table~\ref{tab:study} and a PSCM $M$ based on the causal graph in Figure~\ref{fig:scm} with a canonical specification of the SEs. {\color{myblue}{This requires $|\Omega_{U_1}|=64$ and $|\Omega_{U_2}|=2$.}} We are interested in a counterfactual query corresponding to the PNS for $V_1$ (i.e., \emph{Treatment}) as subject of the intervention and $V_3$ (i.e., \emph{Survival}) as queried variable. Each run of Algorithm~\ref{alg:cem} corresponds to an FSCM. {\color{myblue}{The range of the corresponding PNS after $r=300$ runs and a rounding to the second digit is given by $[0.00,0.43]$. This is the same range we obtain by applying the mapping to credal networks of \citet{zaffalon2020}, and then running an exact credal network inference algorithm.}}\footnote{As discussed in Section~\ref{sec:background}, here and in the following, we compute PNS values in the twin network associated with the FSCM returned by each EMCC run.} In the other examples discussed in the rest of the paper we consider, with different setups, the same counterfactual query computed with the same number of runs and rounding strategy. The script to reproduce all these simulations is available together with the material released to reproduce the experiments (see Section~\ref{sec:experiments}).
\end{example}

Other approaches have been proposed to approximate partially identifiable bounds \citep{zaffalon2020,duarte2021,zhang2021}. Yet, these are focused on the case of unbiased data, and no viable technique to compute general counterfactual inference under selection bias seems to be available.\footnote{This is also the case of the formulae proposed by \cite{tian2000probabilities} to bound PNS (but also PN and PS) queries by observational and interventional ones. Yet, at least in general, only outer approximations are returned by this approach. {\color{myblue}{As a simple example, consider a PSCM over \emph{Treatment} ($V_1$) and \emph{Survival} ($V_3$), with the first endogenous variable being a parent of the second one and the common exogenous parent $U_1$. Setting $|\Omega_{U_1}|=8$ allows to implement a canonical specification of the SEs of both $V_1$ and $V_2$. With the observational data in Table~\ref{tab:study}, such a canonical PSCM induce PNS bounds equal to $[0.00,0.43]$. These values coincide with the bounds of \citet{tian2000probabilities}, which are therefore tight. Removing from $\Omega_U$ the state $U=\tilde{u}$, such that $f_{V_1}(\tilde{u})=1$ and $f_{V_3}(\tilde{u},V_1)=V_1$ does not affect the bounds of \citet{tian2000probabilities}. Yet, the actual bounds of such a non-canonical PSCM become $[0.00.0.09]$. The code to reproduce this simple result can be found in the companion repository of the paper (see Section~\ref{sec:experiments}).}}} This is the focus of the next section, which shows how the EMCC of \citet{zaffalon2021} has a natural extension to problems of selection bias.


\section{Partial Identifiability under Selection Bias}\label{sec:s-emcc}
To model the effect of selection bias, we define a \emph{selector} $S$, i.e., a Boolean variable that is true for selected states of $\bm{V}$ and false otherwise. We consider deterministic selectors, i.e., $S:=g(\bm{V})$ with $g:\Omega_{\bm{V}}\to \{0,1\}$. E.g., for the observational data in Table~\ref{tab:study}, if we regard \emph{drug} and \emph{female} as the first states of the variables \emph{Gender} and \emph{Treatment}, the selector is true if and only if the two variables are in the same state. $S$ can be embedded in a PSCM $M$ as a common child of the endogenous variables, with $g$ being the SE of $S$, thus acting as an additional endogenous variable with endogenous parents only. Notation $M^S$ is used to denote such an augmented PSCM obtained from $M$.

Selector $S$ induces a partition among the records of $\mathcal{D}$. The endogenous values are assumed to become missing on the records such that $S=0$, while remaining available if $S=1$. Let us denote the corresponding datasets as $\mathcal{D}_{S=0}$ and $\mathcal{D}_{S=1}$. In terms of cardinalities, if $N_{S=0}:=|\mathcal{D}_{S=0}|$ and $N_{S=1}:=|\mathcal{D}_{S=1}|$, then $N_{S=0}+N_{S=1}=N=:|\mathcal{D}|$. Note that $\mathcal{D}_{S=0}$ contains $N_{S=0}$ identical records. Figure~\ref{fig:sb} depicts an example of the partition for the biased observational data of Table~\ref{tab:study}.

\begin{figure}[htp!]
\centering
\begin{tikzpicture}
\node (tab) at (4,1) {
\begin{tabular}{ccccr}
\toprule
$V_1$&$V_2$&$V_3$&$S$&\#\\
\midrule
0&0&0&1&378\\
0&0&1&1&1,022\\
0&1&0&0&980\\
0&1&1&0&420\\
1&0&0&0&420\\
1&0&1&0&180\\
1&1&0&1&420\\
1&1&1&1&180\\
\bottomrule
\end{tabular}};
\node (tab1) at (11,2.1) {
\begin{tabular}{cc ccc cr}
\hline 
$U_1$&$U_2$&$V_1$&$V_2$&$V_3$&$S$&\#\\
\hline 
*&*& 0&1&0&1&378\\
*&*& 0&1&1&1&1,022\\
*&*& 1&0&0&1&420\\
*&*& 1&0&1&1&180\\
\hline 
\end{tabular}};
\node (tab0) at (11,-1) {
\begin{tabular}{ccccccr}
\hline 
$U_1$&$U_2$&$V_1$&$V_2$&$V_3$&$S$&\#\\
\hline 
*&*& *&*&*& 0&2,000\\
\hline 
\end{tabular}};
\draw[a2] (tab) -- (tab1);
\draw[a2] (tab) -- (tab0);
\end{tikzpicture}
\caption{Selecting the data of the observational study in Table~\ref{tab:study}.}\label{fig:sb}
\end{figure}

To address partially identifiable queries under selection bias, let us first reformulate the notion of M-compatibility for $M^S$. On the selected states the compatibility condition is like in the unbiased case, and we consequently require Equation~\eqref{eq:ident} to be satisfied for each $\bm{v}$ such that $g(\bm{v})=1$, where the joint endogenous probability on the right-hand side is computed by an endogenous BN whose CPTs are obtained from the frequencies in $\mathcal{D}_{S=1}$. For the unselected states, as the only endogenous observation is $S=0$, we require instead:
\begin{equation}\label{eq:sm_compat11}
\sum_{\bm{v} : g(\bm{v})=0}
\sum_{\bm{u} \in \Omega_{\bm{U}}} 
\left[ \prod_{U\in\bm{U}} \theta_u \cdot \prod_{V\in\bm{V}}
\llbracket f_V(\mathrm{pa}_V)-v \rrbracket 
\right] = P(S=0)\,,
\end{equation}
where, like for the endogenous joint probabilities, also the marginal probabilities of the selector are obtained from the (cardinalities of the) data, e.g., $P(S=0) \simeq N_{S=0}/N$. A discussion on how to proceed if these cardinalities or probabilities are not available is provided at the end of this section.

The task of detecting the set $\Theta_{\bm{U}}$ of elements $\theta_{\bm{U}}$ satisfying the above M-compatibility conditions can be reduced to a log-likelihood maximisation as in the unbiased case. The log-likelihood of $\mathcal{D}_S:=\mathcal{D}_{S=0} \cup \mathcal{D}_{S=1}$ from an FSCM based on $M^S$  can be written as:
\begin{equation}\label{eq:llu}
LL(\theta_{\bm{U}}):=
LL_{S=0}(\theta_{\bm{U}})+
LL_{S=1}(\theta_{\bm{U}})\,,    
\end{equation}
with:
\begin{eqnarray}\label{eq:ll0}
LL_{S=0}(\theta_{\bm{U}}) &:=&
N_{S=0} \log \sum_{\bm{v}:g(\bm{v})=0} \sum_{\bm{u} \in \Omega_{\bm{U}}}
\left[\prod_{U\in\bm{U}} \theta_{u} \cdot 
\prod_{V\in\bm{V}} \llbracket f_V(\mathrm{pa}_V) - v \rrbracket  \right]\,,
\\ \label{eq:ll1}
LL_{S=1}(\theta_{\bm{U}})&:=&\sum_{\bm{v} \in \mathcal{D}_1} \log 
\left[\prod_{U\in\bm{U}} \theta_{u} \cdot 
\prod_{V\in\bm{V}} \llbracket f(\mathrm{pa}_V) - v \rrbracket  \right]
\,.
\end{eqnarray}

The following result, whose proof is in \ref{app:proof}, provides a generalisation of Theorem~\ref{prop:iff} to the case of biased data.

\begin{theorem}\label{th:unimodal}
For $\Theta_{\bm{U}}\neq \emptyset$, the log-likelihood in Equation~\eqref{eq:llu} has no local maxima and achieving its global maximum at $\theta_{\bm{U}}$ is an equivalent condition for $\theta_{\bm{U}}\in\Theta_{\bm{U}}$. {\color{myblue}{For $\Theta_{\bm{U}}=\emptyset$, instead, the log-likelihood takes only values strictly smaller than the global maximum.}}
\end{theorem}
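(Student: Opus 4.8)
The plan is to show that the biased objective in \eqref{eq:llu} is, structurally, the very same kind of function as the unbiased one in \eqref{eq:liku}, so that the argument behind Theorem~\ref{prop:iff} can be transported, and then to isolate the one genuinely new ingredient that the selector introduces. First I would rewrite both summands as logarithms of FSCM-probabilities of endogenous \emph{events}. Each term of $LL_{S=1}$ in \eqref{eq:ll1} is $\log P_{\theta_{\bm U}}(\bm v)$ for a selected $\bm v$, while $LL_{S=0}$ in \eqref{eq:ll0} is $N_{S=0}\log P_{\theta_{\bm U}}(\bm V\in B)$ with $B:=\{\bm v:g(\bm v)=0\}$, since $S=g(\bm V)$ is deterministic. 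Both $P_{\theta_{\bm U}}(\bm v)$ and $P_{\theta_{\bm U}}(\bm V\in B)$ are obtained from the joint FSCM distribution of \eqref{eq:fulljoint} by marginalising $\bm U$ and (for the latter) summing the consistent $\bm v$; in particular each is a nonnegative linear functional of that joint. Collecting these quantities into a single distribution over the \emph{coarsened} outcome space $\mathcal O:=\{\bm v:g(\bm v)=1\}\cup\{\star\}$ (the symbol $\star$ lumping together all unselected states), I set, for a map $\rho:\Delta_{\bm U}\to\Delta_{\mathcal O}$, the components $\rho_{\bm v}(\theta_{\bm U}):=P_{\theta_{\bm U}}(\bm v)$ for $g(\bm v)=1$ and $\rho_\star(\theta_{\bm U}):=P_{\theta_{\bm U}}(\bm V\in B)$. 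Then \eqref{eq:llu} reads $LL=\Phi\circ\rho$, where $\Phi(\cdot):=\sum_{o\in\mathcal O}n_o\log(\cdot)_o$ is the multinomial log-likelihood with counts $n_{\bm v}$ equal to the multiplicity of $\bm v$ in $\mathcal D_{S=1}$ and $n_\star:=N_{S=0}$. This is exactly the template of \eqref{eq:liku}, the only differences being the positive integer multiplicities and the single lumped cell $\star$.

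Second I would settle the characterisation of the maximisers, together with the empty case. Since $\Phi$ is strictly concave on the face of $\Delta_{\mathcal O}$ supported by the data, Gibbs' inequality gives that $\Phi$ is maximised uniquely at the empirical coarse distribution $\hat\rho$, with $\hat\rho_{\bm v}=n_{\bm v}/N$ and $\hat\rho_\star=N_{S=0}/N$. Hence $\theta_{\bm U}$ maximises $LL$ iff $\rho(\theta_{\bm U})=\hat\rho$. A short check shows that this equality is precisely the reformulated M-compatibility of $M^S$: the selected-state equations reproduce \eqref{eq:ident} and the lumped-cell equation reproduces \eqref{eq:sm_compat11} (modulo the routine normalisation relating the $\mathcal D_{S=1}$-frequencies to joint frequencies). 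Consequently, when $\Theta_{\bm U}\neq\emptyset$ the set of global maximisers is exactly $\Theta_{\bm U}$, and when $\Theta_{\bm U}=\emptyset$ the target $\hat\rho$ is not in the image of $\rho$, so strictness of Gibbs' inequality forces $LL(\theta_{\bm U})$ to stay strictly below $\Phi(\hat\rho)$ for every $\theta_{\bm U}$, which is the final sentence of the statement.

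The remaining and hardest step is the absence of local maxima, and this is where the selector bites. For $LL_{S=1}$ alone the c-component factorisation of \citet{tian2002studies} splits the objective into a sum over disjoint exogenous blocks $\theta_{\bm U^{(c)}}$, each of which is handled exactly as in the proof of Theorem~\ref{prop:iff}. The new term $LL_{S=0}=N_{S=0}\log P_{\theta_{\bm U}}(\bm V\in B)$ does \emph{not} respect this factorisation, because $S$ is a common child of endogenous variables living in several c-components; concretely $\log P_{\theta_{\bm U}}(\bm V\in B)$ is a logarithm of a sum of monomials $\prod_U\theta_u$ and is therefore not concave in $\theta_{\bm U}$. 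I expect this coupling to be the main obstacle. To overcome it I would stay with the composition $LL=\Phi\circ\rho$: writing the first-order (KKT) stationarity conditions on the product of simplices and recognising them as the fixed-point equations of the EM operator of Algorithm~\ref{alg:cem} applied to $M^S$ --- where an unselected record contributes the expected counts $P_{\theta_{\bm U}}(u\mid S=0)$ --- I would then use the strict concavity of $\Phi$ along feasible variations of $\rho$ to argue that at any \emph{local maximiser} these equations force $\rho(\theta_{\bm U})=\hat\rho$, so that the point is in fact global. The determinism of the selector is what keeps the argument inside the coarse-multinomial template (so that $\{S=0\}$ is a bona fide outcome cell rather than a new nonlinearity), and it is the strict concavity of $\Phi$ that rules out spurious local maxima; saddle points are neither excluded nor need to be, in agreement with the discussion following Algorithm~\ref{alg:cem}.
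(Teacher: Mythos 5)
Your first two paragraphs are essentially the paper's own argument: the lumping of all unselected states into a single cell $\star$, the reading of \eqref{eq:llu} as a coarse multinomial log-likelihood $\Phi\circ\rho$, and the Gibbs-type characterisation of the maximisers (including the $\Theta_{\bm{U}}=\emptyset$ case) all appear there, phrased via a map $h$ that sends $\bm{v}$ to itself when $g(\bm{v})=1$ and to a dummy state $*$ otherwise. The genuine gap is your third paragraph. The claim that at any local maximiser the first-order conditions ``force $\rho(\theta_{\bm{U}})=\hat\rho$'' is exactly the no-local-maxima statement you are supposed to prove, and strict concavity of the outer function $\Phi$ cannot deliver it on its own: each component of $\rho$ is multilinear in the blocks $\theta_U$, so the image of $\rho$ is in general a non-convex subset of $\Delta_{\mathcal{O}}$, and a strictly concave function restricted to a non-convex set can perfectly well have stationary points and local maxima that do not attain the unconstrained optimum $\hat\rho$. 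Log-likelihoods of latent-variable models are logarithms of multilinear functions and generically \emph{do} have spurious local maxima; ruling them out in this specific causal setting is the nontrivial content of Theorem~\ref{prop:iff}, so it cannot be recovered by a generic KKT-plus-concavity argument, and as written the hardest part of the statement is asserted rather than proved.

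The paper avoids this difficulty entirely by pushing your own construction one step further: the coarse problem is not merely ``the same template'' as \eqref{eq:liku} --- it \emph{is} an instance of it. Setting $l:=h\circ f:\Omega_{\bm{U}}\to\Omega_T$, surjectivity of $l$ makes the pair consisting of $\bm{U}$ and the single endogenous variable $T$ with SE $l$ a proper PSCM $M'$, and the biased data become a \emph{complete} observational dataset $\mathcal{D}'$ for $T$ once every $S=0$ record is rewritten as the observation $T=*$ (which is a bona fide state of $T$). One then checks that \eqref{eq:llu} coincides term by term with the unbiased log-likelihood \eqref{eq:liku} of $\mathcal{D}'$ under $M'$, and quotes Theorem~\ref{prop:iff} as a black box, local-maxima clause included. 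Your c-component worry is a red herring: Tian's factorisation is a device used \emph{inside} the proof of Theorem~\ref{prop:iff}, not a hypothesis of it; the model $M'$ simply has a single c-component ($T$ is a common child of all of $\bm{U}$), a case the theorem covers. So no new stationarity analysis of the coupled term $LL_{S=0}$ is needed; replacing your third paragraph by this reduction turns your proposal into the paper's proof.
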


As a consequence of Theorem~\ref{th:unimodal}, we can apply Algorithm~\ref{alg:cem}
to PSCM $M^S$ and dataset $\mathcal{D}_S$ to obtain FSCMs whose log-likelihood takes its global maximum and hence satisfies the compatibility constraints. Causal queries computed with those FSCMs are consequently inside the bounds for the partially identifiable query. The more points the better the approximation. 

Note that the data in $\mathcal{D}_{S=0}$ are $N_{S=0}$ instances of the same observation $S=0$. Thus, when coping with biased data, line 5 of Algorithm~\ref{alg:cem} rewrites as:
\begin{equation}\label{eq:newline}
P_{t+1}(U) \leftarrow \dfrac{N_{S=0} P_t(U|S=0)+\sum_{\bm{v} \in \mathcal{D}_1} P_t(U|\bm{v})}{
N_{S=0}+N_{S=1}}\,.
\end{equation}

Equation~\eqref{eq:newline} shows the divergent effect of two datasets $\mathcal{D}_{S=0}$ and $\mathcal{D}_{S=1}$ on the exogenous chances. While the selected states push the chances towards their unbiased values, the unselected ones act as a noise term only forcing the chances to be zero on the exogenous states generating the selected endogenous states. Overall, compared to the unbiased case, we expect this weighted average to induce wider bounds on the counterfactual queries. An example is reported here below.

\begin{example}\label{ex:biased}
Consider the same setup as in Example~\ref{ex:unbiased}. Here we compute the PNS for the observational data in Table~\ref{tab:study} with a selector preventing the data associated with the grey rows from being available. These are half of the data, which means $P(S=1)=0.5$. With the two datasets induced by such a selector (see Figure~\ref{fig:sb}), {\color{myblue}{the modified version of Algorithm~\ref{alg:cem} for biased data returns the range $[0.00,0.73]$. Unlike the unbiased case considered by Example~\ref{ex:unbiased}, a comparison with the exact method cannot be performed in this case. Figure~\ref{fig:sb2} depicts the PNS range (as determined by the points) for different selection mechanisms corresponding to different values of $P(S=1)$. The range size for the unbiased case increases for decreasing values of $P(S=1)$,}} this reflecting the fact that we focus on selection mechanisms obtained by removing records incrementally.
\end{example}

\begin{figure}[htp!]
\centering
\pgfplotstableread{exe1_qm.tex}{\tablea}
\pgfplotstableread{exe2_qm.tex}{\tableb}
\begin{tikzpicture}[]
\pgfplotsset{every x tick label/.append style={font=\footnotesize, yshift=-0.5ex}}
\pgfplotsset{every y tick label/.append style={font=\footnotesize, yshift=-0.5ex}}
\begin{axis}[xmin = 0, xmax = 1, ymin = 0, ymax = 1,
xlabel = {$P(S=1)$},ylabel = {PNS},
x tick label style={rotate=330,anchor=west},
xtick = {1.,0.895,0.6395,0.5,0.15,0.0},
ytick = {0.250,0.500,0.750,1.000},
yticklabels = {0.25,0.50,0.75,1.00},
xticklabels = {1.00,0.89,0.64,0.50,0.15,0.00},
grid = both,major grid style = {lightgray},minor grid style = {lightgray!25}]
\addplot[gray,fill,opacity=0.2,sharp plot] table [x ={x}, y = {upper}] {\tablea}|- (current plot begin);
\addplot[black, only marks, mark size=1pt] table [x ={x1}, y = {y1}]{\tableb};
\addplot[black, only marks, mark size=1pt] table [x ={x2}, y = {y2}]{\tableb};
\addplot[black, only marks, mark size=1pt] table [x ={x3}, y = {y3}]{\tableb};
\addplot[black, only marks, mark size=1pt] table [x ={x4}, y = {y4}]{\tableb};
\addplot[black, only marks, mark size=1pt] table [x ={x5}, y = {y5}]{\tableb};
\addplot[black, only marks, mark size=1pt] table [x ={x6}, y = {y6}]{\tableb};
\end{axis}
\end{tikzpicture}
\caption{PNS ranges (grey) for different selection levels (x-axis) induced by $r=300$ EMCC runs (black points denote the result of each run).}
\label{fig:sb2}
\end{figure}
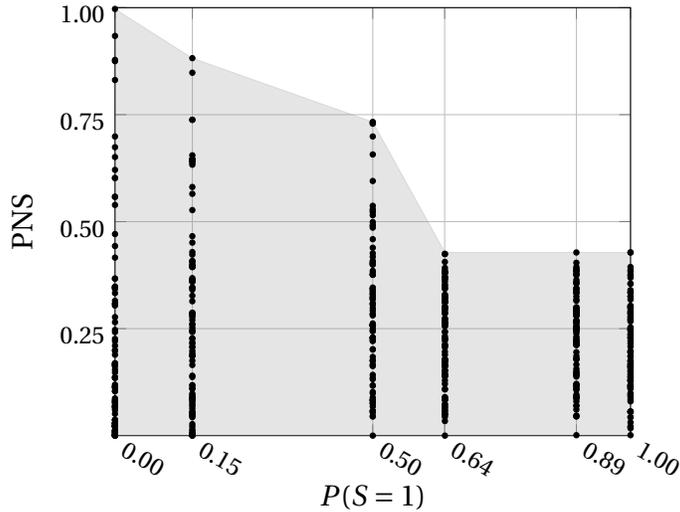

Note that in the limit $P(S=1)\to 0$, we are basically ignoring the data in $\mathcal{D}_{S=1}$. If this is the case, the log-likelihood is proportional to $N_{S=0}$ and this implies that EMCC gives the same results irrespectively of the size of the (completely missing) dataset. In those cases our EMCC trivially returns the exogenous PMFs sampled for initialisation. This explains, for instance, the vacuous result obtained in this limit in the left side of Figure~\ref{fig:sb2}.

Let us also remark, with reference to Equation~\eqref{eq:newline}, that the execution of EMCC under selection bias requires the cardinality $N_{S=0}$ of $\mathcal{D}_{S=0}$. This corresponds to the common case of an observational process making unavailable the output of the measure, but not the fact that the measurement (attempt) was performed. We expect therefore that an investigator will have in many cases an idea of the size of the neglected part of the population; or that they will be able to estimate $N_{S=0}$ from $P(S=0)$ as $N_{S=0}/N_{S=1} \simeq P(S=0)/P(S=1)$; or they could estimate an upper bound for $P(S=0)$. These estimates can follow even simply by first estimating the sample size before the selection; time and resources constraints or domain knowledge are usually enough to bound such a size. In the worst possible case, one could use the (very) conservative approach of taking the limit $P(S=0)\to 1$. This is not recommended because it corresponds to a very extreme measure (e.g., like in Figure~\ref{fig:sb2}).

Finally notice that as in principle $S$ might be a common child of all the variables in $\bm{V}$, implementing SE $g$ as a CPT might lead to an exponential blow-up, this making the inference of $P(U|S=0)$ required by EMCC intractable. In practice, as in \citet{bareinboim2012controlling} or in our example, the selector might only depend on a subset of $\bm{V}$ of bounded cardinality. Circuital approach like the one recently proposed by \citet{darwiche2022causal} might be considered to bypass such a limit.

\section{Coping with Multiple Datasets}\label{sec:multi-db}
In the previous sections we addressed the problem of partial identifiability in causal queries when coping with a purely observational dataset, no matter whether unbiased (Section~\ref{sec:emcc}) or biased (Section~\ref{sec:s-emcc}). The challenge of this section is to extend our approach to the case of multiple datasets in a fully general setup mixing observational and interventional data. We initially assume the data unbiased just for the sake of presentation. The extension to the biased case is discussed in the last part of the section.

Say that our domain of interest is described by a PSCM $M$ over $(\bm{U},\bm{V})$. Endogenous observations from $d$ \emph{independent}, observational or interventional, studies are available. Denote as $\mathcal{D}^k$ the dataset of size $N^k:=|\mathcal{D}^k|$ with the results of the $k$-th study, for each $k\in\{1,\ldots,d\}$. The endogenous variables $\bm{V}_I^k\subseteq\bm{V}$ are those subject to interventions in $\mathcal{D}^k$, while for observational studies we simply set $\bm{V}_I^k:=\emptyset$. Without any lack of generality, we assume that the intervened variables are always the same across the records of a same dataset. If this is not the case, we just split the dataset into smaller groups homogeneous w.r.t. the set of intervened variables. The states induced by the interventions inside each dataset might instead be different (e.g., a clinical study where we intervene by giving or {not} giving the drug to a patient). For $\mathcal{D}^k$, we denote these states as $\tilde{\Omega}_{\bm{V}_I^k} \subseteq \Omega_{\bm{V}_I^k}$, while $\tilde{\Omega}_{\bm{V}_I^k}:=\emptyset$ for an observational $\mathcal{D}^k$. We further clarify such a notation by means of the following example.

\begin{example}\label{ex:setup}
Let $\mathcal{D}^1$ denote the interventional dataset in Table~\ref{tab:study} and $\mathcal{D}^2$ the (unbiased) observational one. We have $\bm{V}^1_I=\{ \mathrm{Treatment} \}$, $\tilde{\Omega}_{\bm{V}_I^1}=\{ \mathrm{drug}, \, \mathrm{no\,drug}\}$, $\bm{V}^2_I=\emptyset$ and $\tilde{\Omega}_{\bm{V}_I^2}=\emptyset$.
\end{example}

We merge the $d$ datasets into a single one, say $\mathcal{D}:=\cup_{k=1}^d \mathcal{D}^k$, with cardinality $N:=\sum_{k=1}^d N^k$. We introduce a new variable $W$ to index all the interventional states  in $\mathcal{D}$. An additional state of $W$, denoted as $w_{\emptyset}$, describes the case of no intervention. Overall we have $\Omega_W := \cup_{k=1}^d \tilde{\Omega}_{\bm{V}_I^k} \cup \{ w_{\emptyset} \}$. We consequently augment $\mathcal{D}$ with a column of observations of $W$. The values of $W$ for records originally belonging to the interventional datasets are those corresponding to the intervened states, while the state $w_{\emptyset}$ is assigned to the records from the observational datasets. We eventually denote as $\mathcal{D}'$ the dataset of $N$ (complete) observations of $(\bm{V},W)$ obtained in this way. Let us demonstrate the above merging procedure by means of an example.

\begin{example}\label{ex:merged}
Consider the setup in Example~\ref{ex:setup} with $\mathcal{D}:=\mathcal{D}^1\cup\mathcal{D}^2$. The index variable $W$ is such that $\Omega_W:=\{ \mathrm{drug}, \mathrm{no}\,\mathrm{drug}, w_{\emptyset} \}$. The corresponding augmented dataset $\mathcal{D}'$ is in Table~\ref{tab:study-augmented}.
\end{example}

\begin{table}[htp!]
\centering
\begin{tabular}{llllr}
\toprule
Treatment&Gender&Survival&$W$&Counts\\
\midrule
drug&female&survived&drug&489\\
drug&female&dead&drug&511\\
drug&male&survived&drug&490\\
drug&male&dead&drug&510\\
no drug&female&survived&no drug&210\\
no drug&female&dead&no drug&790\\
no drug&male&survived&no drug&210\\
no drug&male&dead&no drug&790\\
drug&female&survived&$w_{\emptyset}$&378\\
drug&female&dead&$w_{\emptyset}$&1022\\
drug&male&survived&$w_{\emptyset}$&980\\
drug&male&dead&$w_{\emptyset}$&420\\
no drug&female&survived&$w_{\emptyset}$&420\\
no drug&female&dead&$w_{\emptyset}$&180\\
no drug&male&survived&$w_{\emptyset}$&420\\
no drug&male&dead&$w_{\emptyset}$&180\\
\bottomrule
\end{tabular}
\caption{A merged version of the two datasets in Table~\ref{tab:study} with the index variable $W$.}
\label{tab:study-augmented}
\end{table}

Given $M$ and $\mathcal{D}'$, let us build a, so-called \emph{auxiliary}, PSCM $M'$ as follows. Add variable $W$ to $M$ as an endogenous auxiliary variable corresponding to an additional parent of all the endogenous variables in $\bm{V}$. The SE $f_V'$ associated with $V$ in $M'$ is defined as:
\begin{equation}\label{eq:auxi}
f_V'(\mathrm{pa}_V,W=w)
:=
\left\{
\begin{array}{cc}
v_w & \mathrm{if}\quad V \in \bm{V}_I^{k(w)}\,,\\ 
f_V(\mathrm{pa}_V) & \mathrm{otherwise}\,,\\
\end{array}
\right.
\end{equation}
for each $\mathrm{pa}_V \in \Omega_{\mathrm{Pa}_V}$, $w\in\Omega_W$, and $V\in\bm{V}$, where $f_V$ is the SE of $V$ in $M$, and, $k(w)$ is the index of the interventional dataset associated with $w$, for each $w \neq w_{\emptyset}$, and $v_w\in\Omega_V$ is the state of $V$ appearing in $w$. In practice, Equation~\eqref{eq:auxi} implements the surgery on $V$ required by an intervention if $W$ takes a value corresponding to an intervention involving also $V$, while leaving the original SE of $M$ otherwise. For a proper PSCM specification, as our definition requires each endogenous variable to have at least an exogenous parent, we finally add to $M'$ an exogenous variable $U_W$ defined as a unique parent of $W$, with $\Omega_{U_W}:=\Omega_W$ and the identical map as SE for $W$. As an example, Figure~\ref{fig:scm-augmented} depicts the graph of the auxiliary PSCM $M'$ obtained from the PSCM $M$ discussed in Example~\ref{ex:unbiased} and the dataset $\mathcal{D}'$ in Table~\ref{tab:study-augmented}. 

\begin{figure}[htp!]
\centering
\begin{tikzpicture}[scale=1.0]
\node[dot2,label=above left:{$U_1$}] (u1)  at (2.5,1) {};
\node[dot2,label=above left:{$U_2$}] (u2)  at (-1.5,1) {};
\node[dot,label=above:{$V_1$}] (v1)  at (1,1) {};
\node[dot,label=below left:{$V_2$}] (v2)  at (-1,-1) {};
\node[dot,label=below right:{$V_3$}] (v3)  at (3.0,-1) {};
\node[dot,label=above right:{$W$}] (w)  at (-0.5,2.5) {};
\node[dot2,label=above left:{$U_W$}] (u4)  at (-1,3.5) {};
\draw [a2] (w) .. controls (3,2) .. (v3);
\draw[a2] (w)  -- (v1); 
\draw[a2] (w) -- (v2);
\draw[a] (u2) -- (v2);
\draw[a] (u4) -- (w);
\draw[a] (u1) -- (v1);
\draw[a] (u1) -- (v3);
\draw[a2] (v2) -- (v1);
\draw[a2] (v1) -- (v3);
\draw[a2] (v2) -- (v3);
\end{tikzpicture}
\caption{The auxiliary model $M'$ obtained from the PSCM $M$ in Example~\ref{ex:unbiased} and the merged dataset in Table~\ref{tab:study-augmented}.}\label{fig:scm-augmented}
\end{figure}
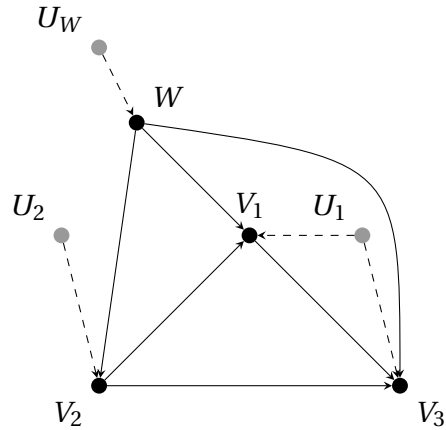

The auxiliary PSCM $M'$ embeds all the possible interventions included in the $d$ datasets. Thanks to the index variable $W$, which keeps track of the different interventions, we can regard $\mathcal{D}'$ as a purely observational dataset for the endogenous variables of $M'$. By Theorem~\ref{prop:iff}, we can therefore apply the EMCC scheme to address non-identifiable queries as in Section~\ref{sec:emcc} even when coping with the generalised setup discussed in this section.

In particular, the global maximum of the log-likelihood achieved by the EM scheme should correspond to the log-likelihood assigned to the dataset by an endogenous BN whose parameters have been trained from the dataset itself. As $W$ is a common parent of all the endogenous variables, the endogenous graph of $M'$ is just the endogenous graph of $M$ augmented by $W$, which acts as a common parent of all the other endogenous variables. As a first example of a partially identifiable query based on heterogeneous data let us consider the following example, where the data integration induce more informative ranges.

\begin{example}\label{ex:hybrid}
{\color{myblue}{The PNS query for the (unbiased) observational data in Table~\ref{tab:study} yields range $[0.00,0.43]$. By also considering the interventional data, the EMCC applied to the model in Figure~\ref{fig:scm-augmented} and the dataset in Table~\ref{tab:study-augmented} returns instead the range $[0.32,0.42]$.}}
\end{example}

So far we only considered the case of unbiased datasets. In the presence of a selection bias, even if only on some datasets, we should first add the selector variable $S$ to all the $d$ datasets, with its value being constantly true on the unbiased ones. The construction of the auxiliary PSCM $M'$ from $M$ is analogous, provided that the auxiliary variable $W$ becomes also a parent of $S$, with the SE of $S$ in $M'$ implementing the different selection functions depending on the particular value of $W$. This requires $S$ to be a common child of all the endogenous variables. A more compact approach consists in considering only the union of the endogenous variables involved in the different biases for the different datasets. Other approaches involving auxiliary variables might be also considered. Note also that we might need different states of $W$ to model the same intervention (or lack of intervention) in different datasets. We do not explicitly formalise this point just for the sake of light notation. Once this is done, Algorithm~\ref{alg:cem} can be executed on $M'$ with $\mathcal{D}'$ as in the unbiased case. An example is reported here below.

\begin{example}\label{ex:hybrid_biased}
In the same setup of Example~\ref{ex:hybrid} consider also the selection bias preventing the grey rows of the observational study in Table~\ref{tab:study} from being available. To compute the PNS in this case we add the Boolean selector $S$ to all the records and obtain a merged dataset $\mathcal{D}'$ as in Table~\ref{tab:study-augmented} with an additional column associated with $S$, which is always one apart from the four grey rows where it is zero. Figure~\ref{fig:scm-augmented2} depicts the corresponding auxiliary model $M'$. Given $\mathcal{D}'$ and $M'$, the EMCC returns the range  $[0.27,0.53]$ for PNS (with the observational data only we had $[0.00,0.73]$).
\end{example}

\begin{figure}[htp!]
\centering
\begin{tikzpicture}[scale=1.0]
\node[dot2,label=above left:{$U_1$}] (u1)  at (2.5,1) {};
\node[dot2,label=above left:{$U_2$}] (u2)  at (-1.5,1) {};
\node[dot,label=above:{$V_1$}] (v1)  at (1,1) {};
\node[dot,label=below left:{$V_2$}] (v2)  at (-1,-1) {};
\node[dot,label=below right:{$V_3$}] (v3)  at (3.0,-1) {};
\node[dot,label=above right:{$W$}] (w)  at (-0.5,2.5) {};
\node[dot2,label=above left:{$U_W$}] (u4)  at (-1,3.5) {};
\draw [a2] (w) .. controls (3,2) .. (v3);
\draw[a2] (w)  -- (v1); 
\draw[a2] (w) -- (v2);
\draw[a] (u2) -- (v2);
\draw[a] (u4) -- (w);
\draw[a] (u1) -- (v1);
\draw[a] (u1) -- (v3);
\draw[a2] (v2) -- (v1);
\draw[a2] (v1) -- (v3);
\draw[a2] (v2) -- (v3);
\node[dot,label=above right:{$S$}] (s)  at (0.5,-2) {};
\draw[a2,thick,dotted] (w) -- (s);
\draw[a2,thick,dotted] (v1) -- (s);
\draw[a2,thick,dotted] (v2) -- (s);
\end{tikzpicture}
\caption{The auxiliary model $M'$ obtained from the PSCM $M$ in Example~\ref{ex:unbiased} and the merged dataset in Table~\ref{tab:study-augmented} in the presence of a selection bias.}\label{fig:scm-augmented2}
\end{figure}
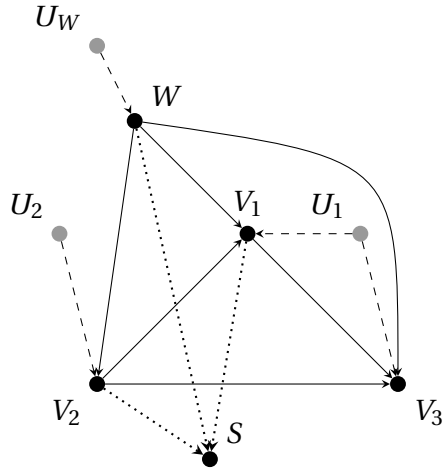

As a final remark, notice that so far we implicitly assumed the different datasets to be generated by the same exogenous chances. Yet, this might not always be the case: consider for instance two clinical studies in different countries where the body mass index has very different distributions over the population. Different chances for the different datasets can be simply introduced to cope with such an extended setup. To model this in the auxiliary PSCM $M"$, we should set the index variable $W$ to be also a parent (or the parent of an auxiliary parent) of the exogenous variables having different chances. Although this is not respecting the standard definition of PSCMs (cf. Section~\ref{sec:background}), $W$ is always observed in $\mathcal{D}'$ and therefore acts as a selector for the exogenous PMF to be updated by the EMCC. Such an index should be also specified in the query of interest. The above setup is demonstrated by means of the following example.

\begin{example}
In the same setup of Example~\ref{ex:hybrid_biased} assume that the two studies in Table~\ref{tab:study} refer to two different exogenous chances associated with the endogenous variable $V_2$ (i.e., \emph{Gender}). Let us modify the auxiliary PSCM $M'$ in Figure~\ref{fig:scm-augmented2} in order to take into account this difference. An arc directly connecting $W$ and $U_2$ would not work as the (three) states of $W$ distinguish between the two interventional states in $\mathcal{D}^1$, which in our assumptions refer to the same exogenous chance. This can be solved by a {coarsening} variable $W':=i(W)$, specified as a child of $W$ and returning the index of the dataset we refer to, i.e., $i(W=w_{\emptyset})=1$ and $i(W=\mathrm{drug})=i(W=\mathrm{no\,drug})=2$. Variable $W'$ is eventually specified as a parent of $U_2$ with  $P(U_2|W'=1)$ and $P(U_2|W'=2)$ modelling the expectations of the exogenous chances in the two datasets. {\color{myblue}{The PNS range we obtain in this case is $[0.20,0.54]$ for the chances associated with the observational dataset, this corresponding to a looser range than $[0.27,0.53]$}}, which we obtained in Example~\ref{ex:hybrid_biased} by forcing the chances to be equal. 
\end{example}

\section{Empirical Validation}\label{sec:experiments}
To evaluate the potential of our approach to the computation of partially identifiable queries with heterogeneous data sources, we report and discuss the results of extensive tests based on a benchmark of synthetic data and causal models (Section~\ref{sec:synthetic}), and a real-world application of counterfactual analysis with biased data (Section~\ref{sec:triangolo}).

Algorithm \ref{alg:cem} was already implemented within the CREDICI library, a Java tool for causal inference \citep{credici}.\footnote{\href{https://www.github.com/idsia/credici}{\tt github.com/idsia/credici}.} We enhance the library with an implementation of the procedures presented in Sections~\ref{sec:s-emcc} and ~\ref{sec:multi-db}, thus allowing to compute counterfactuals with heterogeneous data. To the best of our knowledge this is the first tool for causal inference in such a general setting. The code to reproduce the experiments is available in a dedicated repository.\footnote{\href{https://www.github.com/IDSIA-papers/2023-IJAR-bias-hybrid}{\tt github.com/IDSIA-papers/2023-IJAR-bias-hybrid}.}

\subsection{Synthetic Data}\label{sec:synthetic}
We use the Erd\"os-R\'enyi scheme to sample directed acyclic graphs. {\color{myblue}{If the sampled graph is such that there are non-root nodes without root parents, we add to the graph a new root parent node for each one of these nodes.}} Parentless nodes are then regarded as exogenous variables, and the other ones as Boolean endogenous ones. SEs and exogenous cardinalities are obtained by sampling (without replacement) the deterministic relations between each endogenous variable and its endogenous parents, letting the states of the exogenous parents index the relations with $|\Omega_U|\leq 64$ for each $U\in\bm{U}$. {\color{myblue}{Note that this typically induces a non-canonical specification of the corresponding SE.}}

From such a PSCM $M$ we sample the \emph{ground-truth} chances, thus obtaining FSCM $M^*$. Overall we generate $220$ models with $|\bm{V}|$ ranging from $5$ to $17$.

For each model we select three endogenous variables to be called, respectively, input, target, and covariate. Following a topological order, we set as input the first variable with an exogenous confounder. The target is a leaf such that there is a directed (endogenous) path connecting the input to the target, and the covariate is a variable belonging to that path. We sample from $M^*$ three datasets of endogenous observations, namely: (i) an observational dataset $\mathcal{D}_O$; (ii) an interventional dataset $\mathcal{D}_I$, with interventions on the input and an equal number of positive and negative values for the intervened variable; (iii) another interventional dataset $\mathcal{D}_{IB}$ with interventions on the covariate and a selector based on the values of input, covariate and target. To avoid very strong or very weak biases, the selectors are designed in order to have $0.25 \leq P(S=1) \leq 0.75$. For each model, the three datasets have the same size, with values ranging from $1,000$ to $5,500$ over the different models in order to ensure M-compatibility. Note that for the biased dataset, the size is intended to be the one before the selection process.

We evaluate the PNS for the input and the target from the PSCM $M$. This is done by adding the different datasets following two different orderings. In the first we start from $\mathcal{D}_{IB}$, then we add $\mathcal{D}_O$, and finally also $\mathcal{D}_I$. In the second we start from $\mathcal{D}_{I}$, then we add $\mathcal{D}_O$, and finally also $\mathcal{D}_{IB}$. In both cases the EMCC is executed with $r=300$ runs. Such a sequential approach is considered here only to evaluate the impact of the different datasets. In a real scenario all the available data would be used from the start.

Before commenting on the results, we use the credibility intervals derived by \citet{zaffalon2022}
to evaluate {\color{myblue}{the quality of the inner approximation given by the EMCC-delivered range with the selected number of runs. We compute the probability of a relative error (at each extreme of the range)}} smaller than $\epsilon=0.1$ in our simulations and obtain a median confidence equal to $0.999$ and a lower quartile equal to $0.921$. Those high values support the accuracy of our inference scheme, and allow to reasonably approximate the ground-truth intervals with the EMCC values.

Concerning the results with the heterogeneous data, we notice that adding a new dataset typically induces a shrink in the interval. To summarise the results of our experiments we therefore compute the \emph{relative shrink} defined as one minus the ratio between the size of {\color{myblue}{the PNS range with the additional dataset(s) and the PNS range with the initial dataset}}. Figure~\ref{fig:exp} depicts the corresponding boxplots. On the left we consider the model with the same chances shared over the different datasets (called \emph{global}), while for the \emph{local} models on the right we let the chances associated with the covariate to be different for each dataset, and we run the PNS query with the chances based on $\mathcal{D}_{IB}$ for the first ordering, and $\mathcal{D}_{I}$ for second.

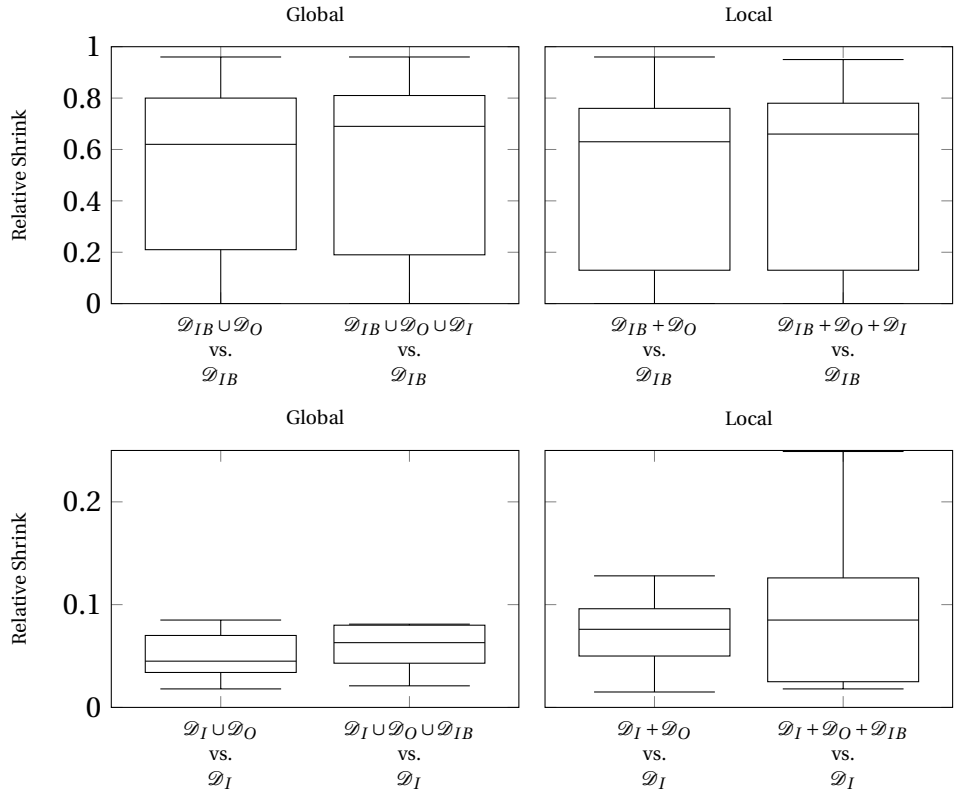
\begin{figure}[ht]
\centering
\begin{tikzpicture}[]
\begin{axis}
[boxplot/draw direction=y,title={\scriptsize Global},ylabel={\scriptsize Relative Shrink},height=5cm,width=7cm,
ymin=0.0,ymax=1.0,cycle list={{black},{black}},
xtick={1,2,3},xticklabels={\scriptsize{$\begin{array}{c}{\mathcal{D}_{IB}\cup\mathcal{D}_{O}}\\\mathrm{vs.}\\{\mathcal{D}_{IB}}\end{array}$},\scriptsize{$\begin{array}{c}{\mathcal{D}_{IB}\cup\mathcal{D}_{O}\cup\mathcal{D}_{I}}\\\mathrm{vs.}\\{\mathcal{D}_{IB}}\end{array}$}}]
\addplot+[fill,fill opacity=0.0,boxplot prepared={median= 0.62 ,upper quartile= 0.80 ,lower quartile= 0.21 ,upper whisker= 0.96 ,lower whisker=0}
] coordinates {};
\addplot+[fill,fill opacity=0.0,boxplot prepared={median= 0.69 ,upper quartile= 0.81 ,lower quartile= 0.19 ,upper whisker= 0.96 ,lower whisker=0.0}] coordinates {};
\end{axis}
\end{tikzpicture}
\begin{tikzpicture}[]
\begin{axis}
[boxplot/draw direction=y,title={\scriptsize Local},ylabel={},height=5cm,width=7cm,
ymin=0.0,ymax=1.0,cycle list={{black},{black}},yticklabels={},
xtick={1,2,3},xticklabels={\scriptsize{$\begin{array}{c}{\mathcal{D}_{IB}+\mathcal{D}_{O}}\\\mathrm{vs.}\\{\mathcal{D}_{IB}}\end{array}$},\scriptsize{$\begin{array}{c}{\mathcal{D}_{IB}+\mathcal{D}_{O}+\mathcal{D}_{I}}\\\mathrm{vs.}\\{\mathcal{D}_{IB}}\end{array}$}}]
\addplot+[fill,fill opacity=0.0,boxplot prepared={median= 0.63 ,upper quartile= 0.76 ,lower quartile= 0.13 ,upper whisker= 0.96 ,lower whisker=0}] coordinates {};
\addplot+[fill,fill opacity=0.0,boxplot prepared={median= 0.66 ,upper quartile= 0.78 ,lower quartile= 0.13 ,upper whisker= 0.95 ,lower whisker=0}
] coordinates {};
\end{axis}
\end{tikzpicture}
\begin{tikzpicture}[]
\begin{axis}
[boxplot/draw direction=y,title={\scriptsize Global},ylabel={\scriptsize Relative Shrink},height=5cm,width=7cm,
ymin=0.0,ymax=0.25,cycle list={{black},{black}},
xtick={1,2,3},xticklabels={\scriptsize{$\begin{array}{c}{\mathcal{D}_{I}\cup\mathcal{D}_{O}}\\\mathrm{vs.}\\{\mathcal{D}_{I}}\end{array}$},\scriptsize{$\begin{array}{c}{\mathcal{D}_{I}\cup\mathcal{D}_{O}\cup\mathcal{D}_{IB}}\\\mathrm{vs.}\\{\mathcal{D}_{I}}\end{array}$}}]
\addplot+[fill,fill opacity=0.0,boxplot prepared={median= 0.045 ,upper quartile= 0.07 ,lower quartile= 0.034 ,upper whisker= 0.085 ,lower whisker=0.018}
] coordinates {};
\addplot+[fill,fill opacity=0.0,boxplot prepared={median= 0.063 ,upper quartile= 0.08 ,lower quartile= 0.043 ,upper whisker= 0.081 ,lower whisker=0.021}] coordinates {};
\end{axis}
\end{tikzpicture}
\begin{tikzpicture}[]
\begin{axis}
[boxplot/draw direction=y,title={\scriptsize Local},ylabel={},height=5cm,width=7cm,
ymin=0.0,ymax=0.25,cycle list={{black},{black}},yticklabels={},
xtick={1,2,3},xticklabels={\scriptsize{$\begin{array}{c}{\mathcal{D}_{I}+\mathcal{D}_{O}}\\\mathrm{vs.}\\{\mathcal{D}_{I}}\end{array}$},\scriptsize{$\begin{array}{c}{\mathcal{D}_{I}+\mathcal{D}_{O}+\mathcal{D}_{IB}}\\\mathrm{vs.}\\{\mathcal{D}_{I}}\end{array}$}}]
\addplot+[fill,fill opacity=0.0,boxplot prepared={median= 0.076 ,upper quartile= 0.096 ,lower quartile= 0.05 ,upper whisker= 0.128 ,lower whisker=0.015}] coordinates {};
\addplot+[fill,fill opacity=0.0,boxplot prepared={median= 0.085 ,upper quartile= 0.126 ,lower quartile= 0.025 ,upper whisker= 0.249 ,lower whisker=0.018}
] coordinates {};
\end{axis}
\end{tikzpicture}
\caption{Relative shrinks induced by integration of additional datasets.}\label{fig:exp}
\end{figure}

When starting with biased data (plots on the top of Figure~\ref{fig:exp}) we observe noticeable shrinks w.r.t. the initial PNS ranges, these being above $60\%$ for both global and local models. This advocates the benefits of developing tools for counterfactual inference with heterogeneous data. Such a shrink is achieved by adding a single dataset to the initial one. The effect of adding the second dataset is considerably weaker (about $5\%$ for both class of models), but still noticeable. Low shrinks are also obtained when we start with the unbiased interventional data (plots on the bottom). This supports the intuition that the unbiased interventional data are the most informative for causal analysis. Yet, we still observe a non-negligible advantage in adding also observational or biased data.

Finally let us also report a dedicated analysis of the effect of the missingness induced by a selection bias with a single observational dataset as in Section~\ref{sec:s-emcc}. We consider different selectors with no restrictions on the values of $P(S=1)$. For each dataset we compare the PNS ranges with the narrower ones obtained by removing the bias. The quality of the EMCC ranges as based on the credible intervals is analogous to the one observed with the heterogeneous data. As a descriptor we use the difference between the two {\color{myblue}{lower values and we normalise it with the difference obtained when the biased range is $[0,1]$. We call this descriptor \emph{normalised bias effect}. We similarly proceed for the upper values. The aggregated results are grouped w.r.t. a discretisation of $P(S=1)$ and the corresponding boxplots displayed in Figure~\ref{fig:bias}. As expected with less missing records the ranges become more informative and closer to their unbiased values.}}

\begin{figure}[htp!]
\centering
\begin{tikzpicture}
\begin{axis}[
boxplot/draw direction=y,
name=border,
ylabel={Normalised Bias Effect},
boxplot={draw position={1/3 + floor(\plotnumofactualtype/3) + 1/3*mod(\plotnumofactualtype,3)},box extend=0.13},
every axis plot/.append style={fill,fill opacity=0.0},
x tick label style={align=center,rotate=-20},
xtick={1/3,2/3,3/3,4/3,5/3,6/3},
xlabel={$P(S=1)$},
xticklabels={0.00,0.00-0.25,0.25-0.50,0.50-0.75,0.75-1.0},
cycle list={{black},{black},{black},{black},{black},{black}}]
\addplot+ table[row sep=\\,y index=0]{data\\0.878352819\\0.964670447\\0.78673313\\0.999037369\\0.568087003\\};
\addplot+ table[row sep=\\,y index=0] {data\\0.749493977\\0.886850415\\0.653996167\\0.994202377\\0.329903232\\};
\addplot+ table[row sep=\\,y index=0] {data\\0.539433512\\0.716697141\\0.470182584\\0.995551487\\0.117807626\\};
\addplot+ table[row sep=\\,y index=0] {data\\0.461162214\\0.568472492\\0.225357124\\0.994310909\\0.001657255\\};
\addplot+ table[row sep=\\,y index=0] {data\\0.10311022\\0.356093195\\0.00540237\\0.812313949\\0\\};
\end{axis}
\end{tikzpicture}
\caption{Difference between the ranges of biased and unbiased data w.r.t. $P(S=1)$.}\label{fig:bias}
\end{figure}
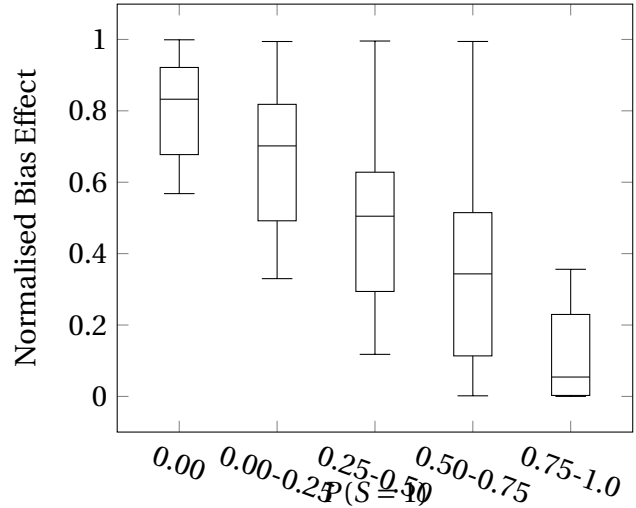

\subsection{A Counterfactual Analysis in Palliative Care with Biased Data}\label{sec:triangolo}
Figure~\ref{fig:triangolo} represents the causal graph used for a study on the preferences of terminally ill cancer patients regarding the place they want to spend their final moments: home or hospital. Although a majority of patients prefer to pass away at home, most of them end up dying in institutional settings. The study focuses on exploring the interventions that healthcare professionals can take to increase the chance of patients dying at home.
All the variables in the graph are intended to be endogenous and the graph corresponds to the BN proposed by \cite{kern2020impact} reduced to the subset of variables for which data were available---variables have been binarised too. Due to ethical reasons, the original data of the study cannot be used and we sampled instead a dataset $\mathcal{D}$ of $1,000$ observations.

\begin{figure}[htp!]
\centering
\begin{tikzpicture}[scale=1.1]
\node[nodo] (s)  at (0,2.0) {\tiny Symptoms};
\node[nodo] (a)  at (3.5,0.0) {\tiny Age};
\node[nodo2] (ap)  at (-3.5,-1.0) {\tiny Awareness (Patient)};
\node[nodo2] (t)  at (-2,-2.0) {\tiny Triangolo};
\node[nodo] (p)  at (0,-1.0) {\tiny Practitioner};
\node[nodo] (h)  at (+2,-2.0) {\tiny Hospital};
\node[nodo] (k)  at (+4.0,-2.0) {\tiny Karnofsky};
\node[nodo] (sf)  at (+5.5,0.4) {\tiny System (Family)};
\node[nodo2] (af)  at (+6,2.0) {\tiny Awareness (Family)};
\node[nodo] (pf)  at (5.0,-4.0) {\tiny Preference (Family)};
\node[nodo] (pp)  at (-3.0,-4.0) {\tiny Preference (Patient)};
\node[nodo3] (d)  at (1.0,-4.0) {\tiny Death};
\draw[arco] (s) -- (ap);
\draw[arco] (s) -- (af);
\draw[arco] (s) -- (t);
\draw[arco] (s) -- (h);
\draw[arco] (h) -- (p);
\draw[arco] (p) -- (t);
\draw[arco] (k) -- (h);
\draw[arco] (k) -- (sf);
\draw[arco] (af) -- (sf);
\draw[arco] (s) -- (k);
\draw[arco] (s) -- (sf);
\draw[arco] (s) -- (af);
\draw[arco] (sf) -- (pf);
\draw[arco] (ap) -- (pp);
\draw[arco] (pp) -- (d);
\draw[arco] (pf) -- (d);
\draw[arco] (a) -- (k);
\draw[arco] (t) -- (d);
\draw[arco] (h) -- (d);
\end{tikzpicture}
\caption{The graph of the model used to study preferences about the place of death in cancer patients.}
\label{fig:triangolo}
\end{figure}
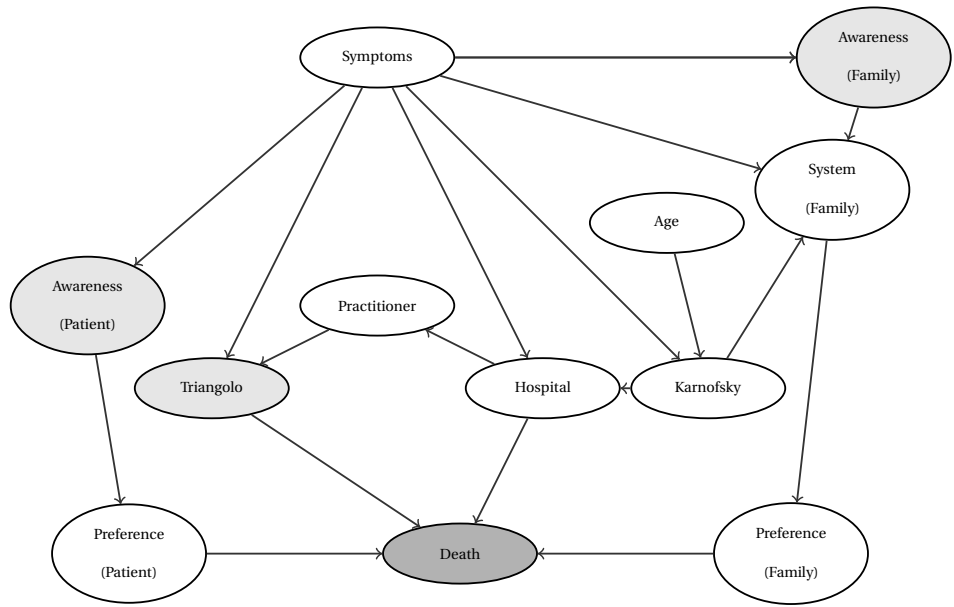

One can intervene on three variables in the network (light grey nodes in Figure~\ref{fig:triangolo}): the patient's and the family's \emph{awareness} of death (which involves communication with the doctors); and home assistance (provided by the \emph{Triangolo} association). The model is taken without exogenous confounders as a consequence of the fact that all the potential confounders have been explicitly represented in the causal graph. In practice we consider a \emph{Markovian} model where each endogenous variable has a separate and unique exogenous parent, that is not reported in Figure~\ref{fig:triangolo} just for the sake of readability. As for SEs, we stick to the \emph{canonical} (see Section~\ref{sec:background}) representation since we want to be least committal w.r.t. the true underlying mechanisms. Such a conservative modelling assumption can induce high cardinality for the exogenous variables. In particular
because of Equation~\eqref{eq:canonical}, the exogenous parent $U$ of variable \emph{Death} has cardinality $|\Omega_U|=2^{16}$. Yet, this does not prevent EMCC from running and, in order to evaluate the relative importance of those three variables, we compute the corresponding PNS ranges having those variables as input and \emph{Death} as target node. With the unbiased dataset we obtain:
\begin{eqnarray*}
\mathrm{Triangolo}&\to&[0.27,0.35]\,,\\
\mathrm{Patient\,\,Awareness}&\to&[0.03,0.11]\,,\\
\mathrm{Family\,\,Awareness}&\to&[0.04,0.11]\,. 
\end{eqnarray*}
{\color{myblue}{The lower value associated to Triangolo being greater than the upper values for  the two other options, shows that home assistance dominates them and hence it is the key factor determining the death place.}} 

The above result has been presented by \citet{zaffalon2023}. Here we investigate whether or not such an evidence in favour of home assistance holds even in the presence of a selection bias. In particular we want to model the fact that studies of this kind are typically biased towards patients progressed to severe conditions. We consequently consider selectors based on \emph{Karnofsky} (that is an index to track disease progressions) and \emph{Symptoms}. For both these binary variables we consider the \emph{bad-condition} states, i.e., low Karnofksy index and serious symptoms. We consequently consider a weak bias removing only the records of patients such that both variables are in the bad-condition state, and a strong bias such that even a single variable in a bad-condition state makes the datum unavailable. The results are:
\begin{eqnarray*}
&&\quad\quad\,\mathrm{weak}\,\,\,/\,\,\,\mathrm{strong}\\
\mathrm{Triangolo}&\to&[0.29,0.35] \,/\, [0.22,0.43]\,,\\
\mathrm{Patient\,\,Awareness}&\to&[0.02,0.11] \,/\, [0.02,0.19]\,,\\
\mathrm{Family\,\,Awareness}&\to&[0.04,0.12] \,/\, [0.00,0.19]\,. 
\end{eqnarray*}
In practice even in the presence of a selection bias, no matter whether weak or strong, it is possible to draw informative conclusions from data and keep advocating the importance of home assistance.

Let us finally note that the fact that {\color{myblue}{the lower value of the PNS range for Triangolo with a weak bias is slightly higher than the corresponding lower value}} in the unbiased case reflects the fact that, with fewer complete data, the range provided by the EMCC with the current number of runs might be stronger than the one obtained with the complete data.

\section{Conclusions and Outlook}\label{sec:conclusions}
We have presented an EMCC algorithm that learns the parameters of a given partially specified structural causal model from a mix of observational and interventional, as well as possibly biased, data. This setting is essentially that of information fusion put forward by \cite{bareinboim2016causal}. In this setting, we appear to be the first to deliver an algorithm with this type of generality and to make the code public. 

A few remarks may be useful to clarify some main aspects of our approach:
\begin{itemize}
\item Our algorithm aims at reconstructing the uncertainty about the latent variables. It does this in an approximate way, by  delivering a set of fully specified SCMs contained in all those that are compatible with the partially specified one. Once this is done, one can compute any counterfactual by iterating its computation over the FSCMs above and aggregating their results so as to yield {\color{myblue}{a range of values for unidentifiable queries that approximates the actual bounds from inside.}}

\item Empirical results support the quality of the approximation and confirm the increased informativeness of ranges obtained by merging multiple studies. Note that more efficient, or accurate, computation would be possible in principle if the aim was to compute a specific counterfactual from the beginning, because searching the space of compatible FSCMs could be tailored to such a goal; but this is not the purpose of our work, which aims at remaining agnostic w.r.t. the follow-up computations, and hence general.

\item The PSCM is an input for our algorithm. We stress that the requirement that SEs are given (they are part of the PSCM) is not as stringent as it might seem, given that we can produce the needed SEs by a preprocessing step if the actual ones are not available: this is possible thanks to recent work \citep{duarte2021,zhang2021} that has introduced a \emph{canonical} specification of the SEs. It can be understood as a least-committal specification that, loosely speaking, can be used without loss of generality (the implication however is that the output will tend to be less informative than the case where the actual SEs are given). In this sense, our work is therefore as general as the works that do not assume the SEs to be given.
\end{itemize}

{\color{myblue}{As for future work, it would be interesting to make a quantitative comparison of our EMCC with the MCMC approach recently put forward by \citet{zhang2021} (a qualitative comparison is available in \citealt[Appendix~B]{zaffalon2023}). Also, it would be important to take advantage of the recent circuital compilation of the EMCC algorithms presented in \cite{huber2023a} to handle bigger models; and it would definitely be useful to extend the basic EMCC to continuous data.}}

\section*{Acknowledgements}
\indent The authors are grateful to Heidi Kern from the Triangolo association for her support with the palliative care problem discussed in Section~\ref{sec:triangolo}. This research was partially funded by MCIN/AEI/10.13039/501100011033 with FEDER funds for the project PID2019-106758GB-C32, and also by Junta de Andaluc\'{i}a grant P20-00091. Finally we would like to thank the ``Mar\'{i}a Zambrano'' grant (RR\_C\_2021\_01) from the Spanish Ministry of Universities and funded with Next Generation EU funds.

\appendix
\section{Proofs}\label{app:proof}
\begin{proof}[Proof of Theorem~\ref{prop:iff}]
{\color{myblue}{Let us first consider the case $\Theta_{\bm{U}} \neq \emptyset$. Take $\{P(U)\}_{U\in \bm{U}} \in \Theta_{\bm{U}}$. Equation~\eqref{eq:ident} is equivalent to have the following equation:
\begin{equation}\label{eq:tian}
\sum_{\bm{u}^c\in\Omega_{\bm{U}}^c} \left[ \prod_{U\in\bm{U}^c} P(u) \cdot \prod_{V\in\bm{V}^c} P(v|\mathrm{pa}(V)) \right] = \prod_{V\in\bm{V}^c} P(v|\bm{w}_V)\,,
\end{equation}
satisfied for each $c \in \mathcal{C}$. Such a result has been proved by \cite{tian2002studies} and Equation~\eqref{eq:tian} corresponds to Equation (4.41) in that work. We similarly decompose Equation~\eqref{eq:liku} w.r.t. the c-components:
\begin{equation}\label{eq:likdec22}
LL(\{P(U)\}_{U\in \bm{U}}) = \sum_{c\in\mathcal{C}} \sum_{\bm{w}^c} n(\bm{w}^c) \cdot  
\log \sum_{\bm{u}^c \in \Omega_{\bm{u}^c}} \left[ \prod_{U\in\bm{U}^c} P(u) \cdot \prod_{V\in\bm{V}^c} 
P(v|\mathrm{pa}_V) \right]\,.
\end{equation}
Putting Equation~\eqref{eq:tian} in Equation~\eqref{eq:likdec22} we obtain:
\begin{equation}\label{eq:likdec2}
LL(\{P(U)\}_{U\in \bm{U}}) =
\sum_{c\in\mathcal{C}} \sum_{\bm{w}^c} n(\bm{w}^c) 
\sum_{V\in\bm{V}^c} \log P(v|\bm{w}_V)\,.
\end{equation}
The r.h.s. of Equation~\eqref{eq:likdec2} exhibits the decomposable structure of the multinomial likelihood of $\mathcal{D}$ from the endogenous BN. Such a concave function has a unique global maximum, say $\lambda^*$, achieved where the relative frequencies are attained by the BN parameters, i.e.,
\begin{equation}\label{eq:freqs}
P(v|\bm{w}_V):=
\frac{n(v,\bm{w}_V)}{n(\bm{w}_V)}\,,
\end{equation}
and no local maxima (see, e.g., \citeauthor{koller2009}, \citeyear{koller2009}). This proves the sufficient condition. To prove the necessary condition, consider again Equation~\eqref{eq:tian}, which is equivalent to Equation~\eqref{eq:ident}. Assume, \emph{ad absurdum}, that there is a $\{P(U)\}_{U\in\bm{U}}\not\in\Theta_{\bm{U}}$ such that the log-likelihood in Equation~\eqref{eq:liku} attains its global maximum. This means that Equation~\eqref{eq:tian} should be violated for at least a value of $\bm{W}^c$ in a c-component. In this case, putting Equation~\eqref{eq:tian} in Equation~\eqref{eq:likdec22} produces a value smaller than the global maximum $\lambda^*$, as the maximum is achieved if and only if the values in Equation~\eqref{eq:freqs} are used. The same deduction can be applied to any $\{P(U)\}_{U\in\bm{U}}$ when $\Theta_{\bm{U}}=\emptyset$.}}
\end{proof}

\begin{proof}[Proof of Theorem~\ref{th:unimodal}]
Let us first define the following function:
\begin{equation}\label{eq:h}
h(\bm{v}) := \left\{ \begin{array}{ll} \bm{v}&\mathrm{if}\, g(\bm{v})=1\,,\\ *& \mathrm{otherwise}\,.\end{array}\right.
\end{equation}
Variable $T:=h(\bm{V})$ takes its values from $\Omega_{T}:= \Omega_{\bm{V}}^{S=1} \cup \{*\}$ where:
\begin{equation}
\Omega_{\bm{V}}^{S=1}:=\left\{ \bm{v} \in \Omega_{\bm{V}} : g(\bm{v}) = 1\right\}\,.
\end{equation}
Let $f:=\Omega_{\bm{U}}\to \Omega_{\bm{V}}$ denote the joint representation of the SEs $\{f_V\}_{V\in\bm{V}}$. From $f$ and $h$ we can build a map $l:\Omega_{\bm{U}}\to\Omega_{T}$ by composition, i.e., $l := h \circ f$. This allows to obtain from $M$ a PSCM $M'$ with the same exogenous variables of $M$, and $T$ being the only endogenous variable with $l$ as SE. Note that $l$ satisfies surjectivity and $M'$ is therefore a proper PSCM such that $T$ is a common child of all the nodes associated with the variables in $\bm{U}$. We similarly obtain from $\mathcal{D}_S$ a dataset $\mathcal{D}'$ of $N$ complete observations of $T$ as follows: for each (missing) observation of $\bm{V}$ in $\mathcal{D}_{S=0}$ (whose number is $N_{S=0}$) we add an observation $T=*$ to $\mathcal{D}'$ (remember that $*$ is a proper state of $T$); while the observations of $\bm{V}$ in $\mathcal{D}_{S=1}$ are directly added to $\mathcal{D}'$ and regarded as observations of $T$ instead of $\bm{V}$. 

The log-likelihood of $\mathcal{D}'$ from $M'$ can be computed as in Equation~\eqref{eq:liku}, i.e.,
\begin{equation}\label{eq:llprime}
LL'(\theta_{\bm{U}}):=\sum_{t\in\mathcal{D}'} \log \sum_{\bm{u}\in\Omega_{\bm{U}}} \left[ \prod_{U\in\bm{U}} \theta_u \cdot \llbracket l(\bm{u})-t \rrbracket \right]\,.
\end{equation}
Let us show that Equation~\eqref{eq:llprime} coincides with Equation~\eqref{eq:llu}. Because of Equation~\eqref{eq:h}, the elements of $\mathcal{D}'$ such that $T=*$ are those corresponding to $S=0$, their number being therefore $N_{S=0}$. Their contribution to the log-likelihood is therefore the term in Equation~\eqref{eq:ll0}. For the other elements of $\mathcal{D}'$, function $h$ in Equation~\eqref{eq:h} acts as the identical map and hence $l(\bm{u})=f(\bm{u})$. The corresponding contribution to the log-likelihood is the one in Equation~\eqref{eq:ll1}. This proves that the log-likelihood of a biased dataset in Equation~\eqref{eq:liku} can be expressed as the log-likelihood of an unbiased dataset (namely $\mathcal{D}'$) from a proper PSCM (that is $M'$). This allows to apply Theorem~\ref{prop:iff} and hence to prove the thesis.
\end{proof}

\bibliographystyle{elsarticle-harv} 

\end{document}